\newcommand{\Mc}[1]{\mathcal{#1}}
\newcommand{\agt}{\mathcal{V}} 
\newcommand{\agtiso}{\mathcal{V}_{cl}} 
\newcommand{\agtmmax}{\mathcal{V}_{\max}}
\newcommand{\agtmmin}{\mathcal{V}_{\min}}
\newcommand{\zi}{z_{i}} 
\newcommand{\zk}{z_{k}}
\newcommand{\z}{z}
\newcommand{\znoti}{\mathbf{z}_{-i}}
\newcommand{\zbari}{\bar{z}_{i}}
\newcommand{\aik}{a_{ik}}
\newcommand{\aij}{a_{ij}}
\newcommand{\boldz}{\mathbf{z}}
\newcommand{\boldp}{\mathbf{p}}
\newcommand{\ri}{r_{i}} 
\newcommand{\rj}{r_{j}}
\newcommand{\di}{p_{i}} 
\newcommand{\wi}{w_{i}} 
\newcommand{\Sf}{S}
\newcommand{\Cf}{C}
\newcommand{\adjmat}{\mathbf{A}}
\newcommand{\Ni}{\mathcal{N}_{i}}
\newcommand{\Nie}{\mathcal{N}_{i}^{e}}
\newcommand{\Nif}{\mathcal{N}_{i}^{f}}
\newcommand{\agtsumk}{\sum_{k \in \agt}}
\newcommand{\enemysumi}{\sum_{k \in\mathcal{N}_{i}^{e}}}
\newcommand{\friendsumi}{\sum_{k \in \mathcal{N}_{i}^{f}}}
\newcommand{\agtsumknoti}{\sum_{k \in \agt \setminus \left\{i\right\}}}
\newcommand{\agtsumknotj}{\sum_{k \in \agt \setminus \left\{j\right\}}}
\newcommand{\poa}{\mathsf{PoA}}
\newcommand{\poae}{\pi_{e}}
\newcommand{\poau}{\pi_{u}}
\newcommand{\costi}{\chi_{i}}
\newcommand{\costj}{\chi_{j}}
\newcommand{\satratioi}{\mathsf{SR}_{i}}
\newcommand{\realn}{\mathbb{R}^{n}} 
\newcommand{\real}{\mathbb{R}} 
\newcommand{\nat}{\mathbb{N}} 
\newcommand{\boldone}{\mathbf{1}}
\newcommand{\boldzero}{\mathbf{0}}
\newcommand{\trpose}{^{\top}}
\newcommand{\Ce}{\mathsf{C}_{E}}
\newcommand{\Cu}{\mathsf{C}_{U}}
\newcommand{\edg}{\mathcal{L}}
\newcommand{\ldef}{:=}
\newcommand{\rdef}{=:}
\def \mmax{m_{\max}}
\def \mmin{m_{\min}}
\newcommand{\thmtitle}[1]{\mbox{}\textit{(\textbf{#1}.)}}
\newcommand{\remend}{\relax\ifmmode\else\unskip\hfill\fi\hbox{$\bullet$}}
\newcommand{\bulletsym}{\hbox{$\bullet$}}
\newcommand{\bulletend}{\relax\ifmmode\else\unskip\hfill\fi\bulletsym}
\newcommand{\squaresym}{\hbox{$\blacksquare$}}
\newcommand{\proofend}{\relax\ifmmode\else\unskip\hfill\fi\squaresym}
\newcommand{\trianglesym}{\hbox{$\blacktriangle$}}
\newcommand{\egend}{\relax\ifmmode\else\unskip\hfill\fi\trianglesym}
\def \eqpt{\mathcal{E}}				
\def \eqptp{\mathcal{E}(\mathbf{p})}
\def \grph{\mathsf{G}}
\def \lnep{\mathcal{NE}_{l}(\mathbf{p})}
\def \lne{\mathcal{NE}_{l}}
\def \nep{\mathcal{NE}(\mathbf{p})}
\def \ne{\mathcal{NE}}
\def \one{\mathbf{1}}
\newcommand{\tth}{^{\text{th}}}
\renewenvironment{proof}{\textit{Proof:} }{\proofend}
\newtheorem{theorem}{\textbf{Theorem}}[section]
\newtheorem{corollary}[theorem]{\textbf{Corollary}}
\newtheorem{lemma}[theorem]{\textbf{Lemma}}
\newtheorem{proposition}[theorem]{\textbf{Proposition}}
\newtheorem{example}[theorem]{\textbf{Example}}
\newtheorem{remark}[theorem]{\textbf{Remark}}
\newtheorem{define}[theorem]{\textbf{Definition}}
\def\BibTeX{{\rm B\kern-.05em{\sc i\kern-.025em b}\kern-.08em
    T\kern-.1667em\lower.7ex\hbox{E}\kern-.125emX}}
\begin{document}
\title{\LARGE \bf Opinion Dynamics for Utility Maximizing Agents: Exploring the Impact of the Resource Penalty}
\author{Prashil Wankhede$^{1}$, Nirabhra Mandal$^{2}$, Sonia
  Mart\'{i}nez$^{2}$ and Pavankumar Tallapragada$^{1}$
  \thanks{This work was partially supported by Science and Engineering
    Research Board under grant CRG/2023/008573. $^{1}$Prashil Wankhede
    and Pavankumar Tallapragada are with the Department of Electrical
    Engineering, Indian Institute of Science, Bangalore. \{\tt\small \{prashilw,
    pavant\}@iisc.ac.in\}} \thanks{$^{2}$Nirabhra Mandal and Sonia
    Mart\'{i}nez are with the Department of Mechanical and Aerospace
    Engineering, University of California, San Diego.  \{\tt\small
    \{nmandal, soniamd\}@ucsd.edu\}} }
\maketitle
\begin{abstract}
  We propose a continuous-time nonlinear model of opinion dynamics
  with utility-maximizing agents connected via a social influence
  network. A distinguishing feature of the proposed model is the
  inclusion of an opinion-dependent resource-penalty term in the
  utilities, which limits the agents from holding opinions of large
  magnitude.   
  This model is applicable in scenarios where the opinions pertain
  to the usage of resources, such as money, time, computational resources etc.
  Each agent myopically seeks to maximize its utility by revising its
  opinion in the gradient ascent direction of its utility function,
  thus leading to the proposed opinion dynamics.  We show that, for
  any arbitrary social influence network, opinions are ultimately
  bounded. For networks with \emph{weak antagonistic relations}, we
  show that there exists a globally exponentially stable equilibrium
  using contraction theory.  We establish conditions for the existence
  of consensus equilibrium and analyze the relative dominance of the
  agents at consensus. We also conduct a game-theoretic analysis of
  the underlying opinion formation game, including on Nash equilibria
  and on prices of anarchy in terms of \emph{satisfaction
    ratios}. Additionally, we also investigate the oscillatory
  behavior of opinions in a two-agent scenario. Finally, simulations
  illustrate our findings.
\end{abstract}
	
\begin{IEEEkeywords}
  Opinion dynamics, Multi-agent systems, Utility maximization, Game
  theory.
\end{IEEEkeywords}
	
\section{Introduction}
\label{sec:into_and_notation}
\IEEEPARstart{O}{pinion} dynamics deals with the modeling and
mathematical analysis of how beliefs and ideas evolve and spread
within social groups or networks over time. As collective opinions
have far-reaching implications in diverse sectors such as policy,
public health, sociology, finance or economics, it is paramount to
understand their drivers and consequences. Even for smaller groups,
comprehending opinion forming is a necessary first step toward the
management of mixed multi-agent groups, their decision making and
subsequent dynamic interactions.  While many simpler opinion dynamic
models have been developed, modeling and analysis of networked
rational agents who react to neighbors' influence remains limited. 
In this work, we pay particular attention to the setting where the
opinions of the agents are related to the usage of resources for a
particular task. Each agent has heterogeneous resources available to
them which limits their opinions about resource usage. We capture
this in our model by including a resource penalty term in the
utility functions.

\subsubsection*{Literature Review}

Many developments in the opinion dynamics literature stem from classical models, including the French-DeGroot (FD) model~\cite{1956_French, 1974_DeGroot}, the Abelson model~\cite{1964_Abelson}, the Taylor model~\cite{1968_Taylor}, the Friedkin-Johnsen (FJ) model~\cite{1990_NF-EJ}, the Hegselmann-Krause (HK) model~\cite{2002_RH-UK}, the Altafini model~\cite{2012_Altafini}, and the Deffuant-Weisbuch (DW) model~\cite{2001_GD-etal}.

For an in-depth summary of the above
models and survey of the literature, we refer the readers to
\cite{2017_AP-RT_ARC,2018_AP-RT_ARC,2020_FD_BA_MY_AP_CR,2020_QZ_GK_HZ_HL_XC_CCL_YD}. These
fundamental models serve as the foundation for many others in the
literature. Some examples of recent models include the discrete-time
versions of the Altafini model with time-varying signed
networks~\cite{2016_ZM_GS_KH_MC_YH,2017_JL_XC_TB_MB}. The
work~\cite{2021_PCV_KC_FB} proposes the \emph{affine boomerang}
opinion dynamics model with asynchronous opinion updating,
incorporating the phenomenon of the \emph{boomerang} effect into the
dynamics. Unlike~\cite{2012_Altafini,2016_ZM_GS_KH_MC_YH,2017_JL_XC_TB_MB},
in this model~\cite{2021_PCV_KC_FB}, the opinions do not converge to
zero if the network does not satisfy a \emph{structural balance}
property, but rather exhibits bounded
fluctuations. Reference~\cite{2019_MY_YQ_BDOA__MC_AG} proposes the
\emph{expressed-private-opinion} model in which every agent's private
opinion is influenced by the expressed opinions of its neighbors.  The
concept of \emph{social power} was first introduced
in~\cite{1956_French} in order to identify the most influential agent
in the social network. The DeGroot-Friedkin (DF)
model~\cite{2015_PJ-etal} integrates the opinion formation process
with social power evolution through a reflected self-appraisal
mechanism. A generalized DF model was proposed in~\cite{2019_MY_BA}
and an extension of the DF model to stubborn agents was proposed
in~\cite{2021_AM_FB_PJ_YT_NF_LW}.
Some works on opinion dynamics~\cite{2021_PCV_KC_FB,2023_AB_AF_NL}
also analyze fluctuations and periodic behavior of opinions.

The modeling of opinion dynamics through a game-theoretic or a
utility maximization approach is still in its early stages,
with only a limited number of works published thus far. 
Reference~\cite{2009_JM_GA_JS} showed that cooperative control
problems, such as consensus seeking, can be effectively tackled
using game-theoretic methods, particularly through potential and
weakly acyclic games. The works \cite{2014_JG_SR,2015_Bindel_et_al}
interpret the FJ-model as best response dynamics within a
game-theoretic framework, where each agent aims to minimize a cost
function. The work~\cite{2014_PG_JL_FS} introduced a general
framework for social influence grounded in the psychological concept
of cognitive dissonance and demonstrated that various opinion
dynamics models can be viewed as best response dynamics in a
coordination game, where utilities are determined by dissonance
functions.
In a recent study \cite{2021_SRE}, a
dynamic influence maximization game is explored, where a set
of competing \emph{players} allocate their fixed resources
over certain \emph{individuals} (who hold opinions about
\emph{players}) to maximize their utilities in the long
term. Reference \cite{2015_SRE-TB} performs a game-theoretical
analysis of the asynchronous HK model. The
works~\cite{2020_LZ_MY_MC,2023_HDA_MY_LZ_ZC_MC} capture the co-evolution of
opinions and actions taken by the agents under a game-theoretic framework. The
work~\cite{2021_SP-etal} utilizes a continuous-time non-linear
opinion dynamics model to tune the mutual cooperative behavior
of agents in a repeated game. Within this framework, agents
make strategic decisions relying upon rationality and
reciprocity. Another approach, detailed in
\cite{2018_RE_TB_AN_VP_SB}, introduces a discrete-time opinion
dynamics model with a game-theoretic structure where the agents
incur a combination of conformity and manipulation costs based
on the opinions. The aim of each agent is to minimize this
cost.

\par Exploration of how agents' resources impact their opinions and
social influencing capabilities is currently an open question in the
opinion dynamics literature. 
Motivated by this, we adopt a utility maximization and game-theoretic framework in the current work to investigate the limiting effects of agents' resources on their opinions about its usage. A preliminary version of this work appeared in
\cite{2023_PW_NM_PT}, where we assumed the underlying social influence
network to be complete. The current work extends \cite{2023_PW_NM_PT}
to the case of any general social network topologies and also allows
for pairwise antagonistic relationships among the agents. We also
investigate several new questions, such as exponential stability of
the equilibria, price of anarchy and periodic evolution of opinions.
	
\subsubsection*{Contributions}
The main contributions of this work are:

\begin{enumerate}[wide=\parindent]
  \item We define agent's utility function to capture the tradeoffs of
    internal opinion preferences, attachment (or stubbornness) toward
    its own opinion, conformity or non-conformity and lastly a
    resource penalty term. The resource penalty term in the utility
    function depends on the agent's resources, which is a
    representation of the agent's wealth, time etc. In this work, we
    assume that agents are involved in the opinion formation process
    to make some decisions about the use of their resources for a
    particular task. In the
     literature, most models of opinion dynamics are generic and seem
     to assume that the topic of discussion itself has no effect on
     the opinions' evolution. 
    However, this is not realistic in our
    setting.
    Thus, we include a resource penalty term in the utility that keeps
    the opinions bounded.
    We propose a novel opinion dynamics model from the utility
    functions based upon the assumption that every agent myopically
    seeks to maximize its own utility. We refer to the underlying game
    as the \emph{opinion formation game}.

\item Unlike the existing works which consider stubborn agents, under
  the proposed dynamics agents can reach consensus even if their
  internal preferences are different. If the agents' opinions reach a
  consensus equilibrium, we can use the \emph{consensus dominance
    weights} of the agents to deduce the relative \emph{dominance} of
  each agent. The consensus dominance weights depend on the resources
  of the agents.
		
\item We conduct a game-theoretic analysis of the opinion formation
  game.  Our Nash equilibrium conditions hold even when the network
  consists of \emph{antagonistic} relationships where opinions of some
  agents negatively influence the opinions of others.  We provide a
  relation between the set of local Nash equilibria $\lne$, set of
  Nash equilibria $\ne$ of the opinion formation game and the set of
  equilibrium points $\eqpt$ of the opinion dynamics. Specifically, we
  show that $\ne \subseteq \lne \subseteq \eqpt$. Further, we give a
  condition for these sets to coincide.
		
\item In the case of \emph{weak antagonistic} relationships, we show
  that the game exhibits a unique Nash equilibrium and the agents'
  opinions converge to it under the proposed dynamics starting from
  any arbitrary initial opinion profile. 
  A special case of \emph{weak
    antagonistic relationships} is one where antagonistic
  relationships are absent in the social network.

\item In the case where antagonistic relationships are absent, we
  bound the Price of Anarchy in terms of the \emph{satisfaction
    ratios} to quantify the inefficiency of the unique Nash
  equilibrium.  The satisfaction ratio of an agent at a given opinion
  profile of all agents is the ratio of the utility received by the
  agent at this opinion profile to the maximum possible utility; thus
  quantifying how ``satisfied'' that agent is with that opinion
  profile. Using these bounds, we show that if agents opinions
  converge to a consensus then it is a socially optimal outcome.
		
\item We analyze oscillatory and periodic opinion behavior.  We show
  that it is necessary for the agents to have sufficiently strong antagonistic relations for the
  two-agent dynamics to have periodic solutions. We also provide
  sufficient conditions for a Hopf bifurcation to exist for the
  two-agent dynamics.
\end{enumerate}
	
\subsubsection*{Notation}

Throughout the paper, we use non-bold letters for denoting scalars,
bold lowercase letters for denoting vectors, and bold uppercase
letters for denoting matrices. The sets of natural numbers, real
numbers, non-negative real numbers and positive real numbers are
denoted by $\mathbb{N},\real$, $\real_{\geq0}$ and $\real_{>0}$,
respectively. Let $\boldone \in \realn$ and $\boldzero \in \realn$
denote a vector with all elements equal to one and zero,
respectively. For any vector $\boldz \in \realn$, $\boldz^\top$
denotes its transpose. For any scalar $a \in \real$, $|a|$ denotes its
absolute value. For a set $\Mc{S}$, $\mathcal{S}^{n}$ denotes the
Cartesian product of $\mathcal{S}$ with itself $n$ times. The empty
set is denoted by $\varnothing$. We denote the difference of any two
sets $\mathcal{S}_{1}$ and $\mathcal{S}_{2}$ by
$\mathcal{S}_{1}\setminus\mathcal{S}_{2}$. Let
$\grph\ldef \big(\agt, \edg \big)$ be a directed graph, where $\agt$
is the set of nodes and $\edg$ is the set of directed arcs.  In a
directed graph $\grph$, a \emph{directed walk} from a node
$i_{1} \in \agt$ to any node $i_{l} \in \agt$ is a sequence of nodes
$i_{1}\mapsto i_{2}\mapsto \ldots \mapsto i_{l}$ such that
$(i_s, i_{s+1}) \in \edg, \ \forall s \in \{ 1, 2, \ldots, l-1 \}$.

\subsubsection*{Organization of the paper}

The rest of the paper is organized as
follows. Section~\ref{sec:preliminaries} introduces essential
preliminaries for subsequent analysis.
Section~\ref{sec:modeling_and_problem_setup} presents the model for
utility functions, and the opinion dynamics, as well as outlines the
objectives of the paper. Section~\ref{sec:asymptotic_analysis}
includes the asymptotic analysis of the model.
Section~\ref{sec:eqpt_and_game_theoretic_analysis} analyzes consensus
equilibria, Nash equilibria and price of anarchy of the opinion
formation game. Section~\ref{sec:oscillatory_behaviors} deals with
oscillatory behavior of two-agent opinion dynamics.
Section~\ref{sec:sims} includes simulations demonstrating our
results. Finally, we conclude in Section~\ref{sec:conclusions}.  We
have included some of the longer proofs in the appendix.
   
\section{Preliminaries} \label{sec:preliminaries}

Here, we recall some useful concepts from contraction theory. For a
comprehensive description and a compilation of results on contracting
dynamical systems, we refer the reader to
\cite{2023_FB_CTDS_book}. First, we define the \emph{induced
  logarithmic norm} of a matrix and give an interpretation of the
same.

\begin{define}{(\textbf{\emph{Logarithmic norm of a
        matrix}}\cite{2023_FB_CTDS_book})} \label{def:log_norm} Given
  an induced matrix norm $\|.\|$, the induced log-norm of a matrix
  $\mathbf{A} \in \real^{n \times n}$ is given by
\begin{equation*}
	\mu(\mathbf{A}) \ldef \lim_{h \to 0^{+}} \frac{\|\mathbf{I}_{n}+h\mathbf{A}\|-1}{h} \:\:\in \real.
\end{equation*}
If we use the induced $\infty-$matrix norm in the above definition, then we get the induced $\infty-$log norm of $\mathbf{A}$ as
\begin{equation*}
	\mu_{\infty}(\mathbf{A})= \max_{i \in \agt} \Big(a_{ii} + 
	\sum_{j \in \agt \setminus \{i\}} |a_{ij}|\Big). \tag*{\bulletend}
\end{equation*} 
\end{define}
The log-norm can be interpreted as the directional derivative of the 
matrix norm in the direction of $\mathbf{A}$ evaluated at the 
identity matrix $\mathbf{I}_{n}$.
It should be noted that the induced log-norm of a matrix is not a 
matrix norm and can even be negative.
This induced log-norm helps us in getting bounds on the norm of the 
solutions of a continuous time system. For example, consider a 
continuous-time homogeneous LTI system 
$\dot{\mathbf{x}}=\mathbf{A}\mathbf{x}$. Using Coppel's 
inequalities \cite{1965_coppel}, it can be shown that, 
$$\|\mathbf{x}(t)\| \leq 
e^{{\mu(\mathbf{A})}t} \|\mathbf{x}(0)\|.$$
If $\mu(\mathbf{A})<0$ then the above inequality implies that 
the solution $\mathbf{x}(t)$ converges to zero exponentially. We can 
also extend this idea to nonlinear systems. With this motivation, we 
define a \emph{strongly contracting vector field}. 
\begin{define}{(\textbf{\emph{Strongly contracting vector field}} 
\cite{2023_FB_CTDS_book})}  \label{def:strong_contraction}
	Let $\Mc{C} \subset \realn$ be convex. Let $\mathbf{f}: \Mc{C} \to 
	\realn$ be differentiable and let $\mathbf{J}(\boldz)$ denote its 
	Jacobian. Then the vector field $\mathbf{f}$ is said to be 
	\emph{strongly infinitesimally contracting} on $\Mc{C}$ with rate 
	$\alpha >0$ if 
  \begin{equation*}
    \mu(\mathbf{J}(\boldz))<-\alpha\:,\:\forall \: 
    \boldz \in \Mc{C}. \tag*{\bulletend}
  \end{equation*}
\end{define} 

Finally, we recall a result that ensures the existence and uniqueness
of an exponentially stable equilibrium for a strongly contracting
system. The proof of this theorem is based on the Banach contraction
theorem~\cite{2023_FB_CTDS_book}.

\begin{theorem}{(\textbf{\emph{Equilibrium of a strongly contracting
        system}}\cite{2023_FB_CTDS_book})} \label{thm:contraction}
  Suppose $\Mc{C} \subset \realn$ is convex, closed and positive
  $\mathbf{f}$-invariant. If $\mathbf{f}:\Mc{C}\to\realn$ is strongly
  infinitesimally contracting on $\Mc{C}$ with rate $\alpha>0$ then,
  $\mathbf{f}$ has a unique globally exponentially stable equilibrium
  $\boldz^{*}\in \Mc{C}$
    with global Lyapunov functions $V(\boldz)=\|\boldz - \boldz^{*}\|$
    and $V(\boldz)=\|\mathbf{f}(\boldz)\|$.
  \remend
\end{theorem}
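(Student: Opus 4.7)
The plan is to reduce the theorem to Banach's fixed point theorem applied to the time-$T$ flow map of $\mathbf{f}$ for a suitably large $T$. Because $\Mc{C}$ is closed in $\realn$, it inherits a complete metric from the norm $\|\cdot\|$; because $\Mc{C}$ is positive $\mathbf{f}$-invariant, the flow map $\phi_{t}: \Mc{C} \to \Mc{C}$ of $\dot{\boldz} = \mathbf{f}(\boldz)$ is well defined for every $t \geq 0$. The proof will thus hinge on upgrading the pointwise contraction assumption on $\mathbf{J}$ to a genuine contraction of $\phi_{T}$.

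The central step is the flow-level estimate. For $\boldz_{1}, \boldz_{2} \in \Mc{C}$, set $\mathbf{v}(t) \ldef \phi_{t}(\boldz_{1}) - \phi_{t}(\boldz_{2})$. Using convexity of $\Mc{C}$, I would apply the fundamental theorem of calculus to the segment joining $\phi_{t}(\boldz_{2})$ and $\phi_{t}(\boldz_{1})$ to write
\begin{equation*}
\dot{\mathbf{v}}(t) \;=\; \mathbf{M}(t)\,\mathbf{v}(t), \qquad \mathbf{M}(t) \ldef \int_{0}^{1} \mathbf{J}\bigl(\phi_{t}(\boldz_{2}) + s\,\mathbf{v}(t)\bigr)\, ds .
\end{equation*}
By subadditivity and positive homogeneity of the induced log-norm (so that the log-norm of an integral is dominated by the integral of the log-norms) together with the hypothesis $\mu(\mathbf{J}(\boldz)) < -\alpha$ for all $\boldz \in \Mc{C}$, one obtains $\mu(\mathbf{M}(t)) \leq -\alpha$ uniformly in $t$. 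Coppel's inequality for the linear time-varying system $\dot{\mathbf{v}} = \mathbf{M}(t)\mathbf{v}$ then yields
\begin{equation*}
\|\phi_{t}(\boldz_{1}) - \phi_{t}(\boldz_{2})\| \;\leq\; e^{-\alpha t}\,\|\boldz_{1} - \boldz_{2}\|, \quad \forall\, t \geq 0 .
\end{equation*}

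With this in hand, pick any $T > 0$ with $e^{-\alpha T} < 1$. Then $\phi_{T}$ is a strict contraction of the complete metric space $\Mc{C}$, so Banach's fixed point theorem provides a unique $\boldz^{*} \in \Mc{C}$ with $\phi_{T}(\boldz^{*}) = \boldz^{*}$. To promote $\boldz^{*}$ from a periodic point to an equilibrium, I would invoke the semigroup identity $\phi_{T}(\phi_{s}(\boldz^{*})) = \phi_{s}(\phi_{T}(\boldz^{*})) = \phi_{s}(\boldz^{*})$: this says $\phi_{s}(\boldz^{*})$ is also a fixed point of $\phi_{T}$, so by uniqueness $\phi_{s}(\boldz^{*}) = \boldz^{*}$ for every $s \geq 0$, and differentiating at $s=0$ gives $\mathbf{f}(\boldz^{*}) = \boldzero$. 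Uniqueness as an equilibrium is automatic since any other equilibrium is also a fixed point of $\phi_{T}$. Global exponential stability then falls out of the contraction estimate specialized to $\boldz_{2} = \boldz^{*}$.

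The main technical obstacle is the Jacobian-to-flow passage, i.e., verifying $\mu(\mathbf{M}(t)) \leq -\alpha$ and justifying Coppel's inequality in the time-varying setting. Convexity of $\Mc{C}$ plays an essential role here: without it, the segment $\phi_{t}(\boldz_{2}) + s\,\mathbf{v}(t)$ used to build $\mathbf{M}(t)$ need not lie in $\Mc{C}$, and the hypothesis on $\mu(\mathbf{J}(\cdot))$ could not be invoked along that segment. Everything else is routine: Banach's theorem supplies existence and uniqueness, and the contraction estimate directly encodes exponential convergence with rate $\alpha$.
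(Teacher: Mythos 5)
Your proof is correct and follows essentially the route the paper itself points to: the paper states this theorem as a citation from \cite{2023_FB_CTDS_book} without proof, remarking only that it rests on the Banach contraction theorem, and your argument --- the incremental estimate $\|\phi_{t}(\boldz_{1})-\phi_{t}(\boldz_{2})\|\leq e^{-\alpha t}\|\boldz_{1}-\boldz_{2}\|$ obtained from the mean-value form of the Jacobian, subadditivity of the log-norm, and Coppel's inequality, followed by Banach's fixed point theorem applied to the time-$T$ flow map and the semigroup argument to upgrade the fixed point to an equilibrium --- is precisely that standard proof. The one step you gloss over is forward completeness of the flow on $\Mc{C}$ (positive invariance alone does not rule out finite escape time when $\Mc{C}$ is unbounded), but this follows routinely from the same log-norm machinery applied to $t\mapsto\|\phi_{t}(\boldz)-\boldz\|$, so it is a cosmetic rather than substantive gap.
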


\section{Modeling and Problem Setup}
\label{sec:modeling_and_problem_setup}

Consider a set $\agt:=\{1,\ldots,n\}$ of $n$ agents, with
heterogeneous resources, that form opinions on a single topic, which is about the quantity of resources to use for a certain purpose.
We aim to study the evolution of these opinions as the agents interact
with each other over a social network. We start by defining utility
functions for each agent.
Then, we discuss the utility function and our motivations behind
choosing it.
Then, we derive the proposed opinion dynamics from the
utility functions, assuming that each agent myopically seeks to
maximize its utility. Thus, the coupling in the utility functions
creates the coupling in the opinion dynamics.
	
\subsubsection*{Opinions, Utility Function and its Parameters}

We denote the \emph{expressed opinion} of agent $i \in \agt$ at time
$t$ on the topic as $\zi(t) \in \real$. For brevity, we omit the time
argument wherever there is no confusion. The vector
$\boldz \ldef [z_1,\cdots, z_n]\trpose \in \realn$ represents the
stacked opinions $\zi$ of all agents $i \in \agt$. We first present
the \emph{utility function} of agent $i$, then describe the various
parameters and provide motivation for the chosen structure.

The complete opinion profile $\boldz \in \real^n$ determines the
utility for each agent $i \in \agt$ as follows
  \begin{equation} U_i (\boldz,\di) = -\frac{\wi}{2} \big(\zi - \di
    \big)^{2} -\sum_{k\in \agt} \frac{\aik}{2}\left(\zk - \zi
    \right)^{2} -\frac{1}{4\ri}\zi^{4},
    \label{eq:utility}
  \end{equation}
where $z_i \in \real$ and $p_i \in \real$ are the expressed opinion
and internal \emph{preference} on the topic of agent $i \in \agt$
respectively. The parameter $w_i \in \real_{> 0}$ is the importance
that agent $i \in \agt$ attaches to its internal preference on the
topic, while $r_i \in \real_{>0}$ is the resources available to agent
$i \in \agt$ and $a_{ik} \in \real$ is the weight of the influence of
agent $k \in \agt$ on agent $i \in \agt$.
We call the three terms in the utility function as the
\emph{preference term}, \emph{social term} and \emph{resource
penalty}, respectively.
We make the following standing assumption about the parameters 
in~\eqref{eq:utility}.
\begin{enumerate}[label=\textbf{(SA\arabic*)},wide=\parindent] 
  \item \thmtitle{Parameters' signs}
  For each agent $i \in \agt$, $\di \in \real$, $\wi \in \real_{> 0}$, 
  $\ri \in \real_{>0}$ and $\aik \in \real$, $\forall k \in \agt$. 
  \remend
  \label{asmp:param_sign}
\end{enumerate}

Note that we allow for heterogeneous agents where each agent can have 
different parameters in their utility function. Moreover, in this 
paper, we assume that an agent's utility is affected by others via a 
directed social influence graph $\grph\ldef \big(\agt, \edg, \adjmat 
\big)$.
The elements of the \emph{adjacency matrix} $\adjmat$ are denoted by 
$\aij \in \real$. If $\aik \neq 0$ then there exists a directed link 
from node $k \in \agt$ to node $i \in \agt$ with link weight $\aik$. 
This denotes that agent $k$'s opinion influences the opinion of agent 
$i$. The sign of $\aik$ denotes the type of influence relationship and 
its magnitude denotes the degree of influence. Using this idea, we give 
the following definition.
\begin{define} \label{def:neighbor_set}
  \thmtitle{Neighbor set of an agent}
  For each agent $i \in \agt$, the set $\Nie \ldef \{k \in 
  \agt\setminus\{i\} \:|\: \aik<0 \}$ denotes the set of its 
  \emph{enemies} and the set $\Nif \ldef \{k \in \agt\setminus\{i\} 
  \:|\: 
  \aik > 0\}$ denotes the set of its \emph{friends}. Further,
  \begin{align*} \label{neighbors}
    \Ni \ldef \Nie \cup \Nif = \{ k \in \agt\setminus\{i\} \:|\: \aik 
    \neq 0 \},
  \end{align*}	 
  denotes the set of neighbors of agent $i \in \agt$.	
  \bulletend
\end{define}

Note that the self loop weights $a_{ii}$, for any $i \in \agt$, do not 
affect the utility of the agents. We may assume them to be zero without 
loss of generality. The self influence of the agents is captured by the 
preference term in the utility function. 

\subsubsection*{Discussion About the Utility Function}
  Overall, we seek to explain opinion evolution in social networks
  from a utility maximization perspective. In the sequel, we derive
  opinion dynamics as a gradient ascent by the agents of their utility
  functions, with respect to their own opinions. The three terms
  in~\eqref{eq:utility} represent penalties on agent $i$'s expressed
  opinion $z_i$ arising from three factors.  The preference term
  penalizes opinion deviation from the internal preference $\di$. This
  penalty is directly proportional to the importance weight (or
  stubbornness) $w_i$. The social term penalizes the agent's opinion
  for being far away from or close to its neighbors' opinions
  depending upon the type of influence relationships, i.e., on the
  sign of $a_{ik}$'s. 
  These two terms in the utility function lead to the first two terms
  in the opinion dynamics~\eqref{eq:dynamics}, which is the well known
  Taylor's model of opinion dynamics~\cite{1968_Taylor}, when only
  non-antagonistic relations are allowed. In the presence of antagonistic
  relations, the second term 
  in~\eqref{eq:utility} is motivated by the \emph{boomerang
    effect}~\cite{1967_RA_JM_boomerang_effect}. So, our proposed model
  differs from the literature primarily in the resource penalty in the
  utility function~\eqref{eq:utility}.

  The motivation for the resource penalty comes from the observation
  that if the topic of discussion is about the resource usage for some
  purpose, then the resource limitations of the agents must also have
  an effect on the opinion evolution. In our model, the resource
  penalty in the utility function restricts agent $i$ from holding
  opinions of larger magnitude. In order to better motivate our model,
  consider the following example.

  \begin{example}
    Consider $n$ agents with heterogeneous resource limitations. Let
    $z_i(t)$, the opinion of agent $i \in \agt$ at time $t$, be the
    maximum amount of resources agent $i$ is willing to spend on buying a
    particular product, if it were to do the buying at time $t$. Agent
    $i$ has an internal opinion about the value of the product,
    denoted by $\di$. Further, $\wi$ models the confidence of agent
    $i$ in its internal opinion. In addition, the agents are also
    influenced by the opinions of others within their social
    neighborhood. For example, they might look at product reviews or
    inquire within their social circles, which can influence their
    opinion about the value of the product. The weights $\aik$'s model
    the trust or confidence of agent $i$ has in agent $k$'s ability to
    correctly value the product. Alternatively, these weights could
    simply model the degree to which agent $i$ seeks to mimic or
    disagree with agent $k$.
    However, its spending is ultimately constrained by its
    resource limitations, which motivates the resource penalty
    in~\eqref{eq:utility}.  \remend
  \end{example}

  The psychological and social effects of resource constraints on
  decision making of the agents have been studied in various fields
  such as psychology, economics, transportation etc. We refer the
  readers to the papers~\cite{2019_RH_CM_AS_DT_VG,2015_CH_CM} and
  references therein for more details. For example,
  ~\cite{2019_RH_CM_AS_DT_VG} discusses the effects of financial
  constraints on consumer behaviors from four different perspectives:
  \emph{resource scarcity}, \emph{choice restriction}, \emph{social
    comparison} and \emph{environmental uncertainity}. 

  \begin{remark}
    \thmtitle{On the Resource Penalty} The parameter $\ri$ denotes the
    amount of the resources (such as wealth, time etc.)  available to
    agent $i \in \agt$. It could be a proxy for the maximum budget an
    agent has to spend, such as the amount of money available to buy a
    certain good, the amount of time or other resources available for
    doing a task etc. In our work, we assume that the amount of
    resource $\ri$ is static in time. This is reasonable to assume in
    scenarios where the agents' opinions are about the usage of their
    resources itself. The agents actually use their resources only
    after its opinion converges or after a sufficiently long evolution
    time of its opinion. Thus, we can think of the resource penalty as
    a soft constraint on their opinions, which are in turn about how
    much resources they could expend for a good or a task.  The
    greater the resources that agent $i$ has, the larger the magnitude
    of the opinions it can hold.

    In this work, we choose a quartic resource penalty function;
    however, a more general class of penalty functions is also
    acceptable. Mathematically, all that is needed for ensuring the
    boundedness of opinions is to choose a non-negative resource penalty term
    that dominates the other terms in the utility~\eqref{eq:utility} for
    large enough $| \zi |$. \remend
\end{remark}

The following is a standing assumption in this paper.
\begin{enumerate}[resume,label=\textbf{(SA\arabic*)},wide=\parindent] 
  \item \thmtitle{Preferences are fixed parameters}
  For each agent $i \in \agt$, $\di \in \real$ is a fixed parameter. 
  \remend
  \label{asmp:pi_fixed}
\end{enumerate}

The results can easily be extended for the case where $\di = 
\zi(0)$, for each agent $i \in \agt$. For example, we can let $\boldp = 
\boldz(0)$ and consider $(\boldz,\boldp)$ as the state variables with 
$\dot{\boldp}=0$. Since this provides little additional value, we 
choose to think of $\boldp$ as a fixed parameter. In the case where 
$\boldp = \boldz(0)$, we can study the dependence of the opinion 
evolution on the initial opinions $\boldz(0)$ by carrying out a 
parametric study.
\subsubsection*{Opinion Dynamics}
\par We assume that at each time instant, agent
$i \in \agt$ revises its opinion by doing a gradient ascent of its utility
function $U_i$, given in~\eqref{eq:utility}, with respect to its own
opinion $\zi$. Thus, for each $i \in \agt$, we have
\begin{equation}\label{eq:dynamics}
	\begin{split}
		\dot{\z}_{i} = &-\wi [ \zi-\di ] +
		\sum_{k \in \agt} \aik [\zk - \zi]
		-\frac{\zi^{3}}{\ri}.
	\end{split}
\end{equation}

We can rewrite~\eqref{eq:dynamics} equivalently as
  \begin{subequations}
    \begin{align}
      \label{eq:dynamics_split}
      \dot{\z}_{i} = f_i(\boldz,\di) &:=
      \Sf_{i}(\zi,\di)+\Cf_{i}(\boldz),\:\:\forall i \in \agt, 
      \\
      \label{eq:func_self} \Sf_{i}(\zi,\di) %
                                     &:=-\wi\left[\zi-\di \right]-\frac{\zi^{3}}{\ri}, 
      \\
			\label{eq:func_consensus}
			\Cf_{i}(\boldz) &:= \sum_{k \in \agt}\aik [\zk - \zi].
		\end{align}\label{eq:func_split}
	\end{subequations}
Note that $\forall i \in \agt$, the \emph{self function}
$\Sf_{i}(\cdot,\di)$ depends only on $i$'s own opinion, its
preference $\di$ and other parameters. On the other hand,
$\forall i \in \agt$, the \emph{crowd function} $\Cf_{i}(\cdot)$
depends on the deviations of $i$'s opinion $z_i$ from its neighbors'
opinions $\zk$'s. If $\aik > 0\:(<0)$ then $i \in \agt$ wants to agree
(disagree) with agent $k$ and hence the term in $\Cf_{i}(\cdot)$
corresponding to agent $k$ drives $\z_i$ towards (away from) the
opinion of $k \in \agt$.

Now, we make some observations about the self function 
$\Sf_{i}(\cdot,\di)$, which hold $\forall i \in \agt$. $\Sf_{i}(\cdot,\di)$ is 
continuous and strictly decreasing with $\displaystyle \lim_{\zi 
\rightarrow -\infty} \Sf_i(\zi,\di) = \infty$ and
$\displaystyle \lim_{\zi \rightarrow +\infty} \Sf_i(\zi,\di) = -
\infty$. Thus, $\Sf_{i}(\cdot,\di)$ has exactly one
real root for every fixed value of $\di \in \real$. Let us denote the real root of $\Sf_{i}(\cdot,\di)$ as $m_{i}(\di) 
\in \real$, \emph{i.e.}, $\Sf_{i}(m_i(\di),\di) = 0$. Hereafter, we will exclude the preference argument in $\Sf_{i}(\cdot,\di)$ and $m_{i}(\cdot)$ wherever there is no confusion. Moreover, by considering 
$\Sf_i(0)$ and $\Sf_i(\di)$, we can verify that $0 \leq |m_i| \leq 
|\di|$ and $m_i \di \geq 0$. $m_{i} \in \real$ can be interpreted as 
the opinion that agent $i$ would attain under \eqref{eq:dynamics} if it 
were \emph{socially closed} 
from the influence of other agents, \emph{i.e.} $\aik = 0$, $\forall k 
\in \agt$.

\subsubsection*{Opinion Dynamics in Absence of Antagonistic Relations}

An important special case is one where there are no antagonistic 
relations, which we formally state in the following assumption.
\begin{enumerate}[resume,label=\textbf{(A\arabic*)},wide=\parindent] 
  \item \thmtitle{No antagonistic relations}
  \label{asmp:no_antagonist}
  $\forall i \in \agt$, $\Nie = \varnothing$.  \bulletend
\end{enumerate}

Under Assumption~\ref{asmp:no_antagonist}, opinion dynamics 
\eqref{eq:dynamics_split} reduces to
\begin{align}
  \dot{\z}_{i} = f_i(\boldz,\di) &:= 
  \Sf_{i}(\zi)+\Cf_{i}^{+}(\boldz),\:\:\forall i \in
  \agt, \label{eq:dynamics_no_antagonists}
  \\
  \text{where } \ 
  \Cf_{i}^{+}(\boldz) &=
                          \begin{cases}
			0\:&; \:\:\text{if}\:\: \Ni = \varnothing, \\
			\displaystyle{\sum_{j \in \agt}\aij} \, \big[ \zbari-\zi \big] \:&;\:\: \text{otherwise} 
		\end{cases}
		\\
		\label{zbar}
		\zbari &\ldef  \agtsumk \frac{\aik}{\left(\sum_{j \in 
				\agt}\aij\right)} \zk
		\,.
	\end{align}
Note that, under Assumption~\ref{asmp:no_antagonist}, for any $i \in 
\agt$, if $\Ni \neq \varnothing$ then $\zbari$ 
in~\eqref{zbar} is well defined. Moreover, notice that this 
$\zbari$ is a convex combination of the opinions $\zk$ of agent $i$'s 
neighbors. So in \eqref{eq:dynamics_no_antagonists}, it suffices 
$\forall i \in \agt$ to know only $\zbari$ and be unaware of other 
individual agents' opinions. Finally, note that $\forall i \in \agt$,
	\begin{equation}
		\Sf_i(\zi)
		\begin{cases}
			> 0, \quad & \zi < m_i \\
			= 0, \quad & \zi = m_i \\
			< 0, \quad & \zi > m_i 
		\end{cases},
		\,\,
		\Cf_i^{+}(\boldz)
		\begin{cases}
			> 0, \quad & \zi < \zbari \\
			= 0, \quad & \zi = \zbari \\
			< 0, \quad & \zi > \zbari.
		\end{cases}
		\label{eq:SC_sign}
	\end{equation}

\subsubsection*{Objectives} The important analytical questions regarding the opinion dynamics~\eqref{eq:dynamics} that we study in this paper include
\begin{itemize}
  \item asymptotic properties
  \item existence of a consensus equilibrium and its properties
  \item characterization of Nash equilibria and their relation to the 
  equilibria of the dynamics~\eqref{eq:dynamics}
  \item bounds on price of anarchy
  \item oscillatory behavior of opinions in the two agent case.
\end{itemize}
We also study these questions for the cases outlined in 
Assumption~\ref{asmp:weak_antagonist_relations} 
and~\ref{asmp:no_antagonist} to provide stronger results.

\section{Asymptotic Behavior of Opinions} 
\label{sec:asymptotic_analysis} \label{sec:long_term_behavior}

In this section, we study the asymptotic behavior of opinions. We show 
ultimate boundedness of opinions and provide conditions that guarantee 
existence and uniqueness of a globally exponentially stable equilibrium 
point. We denote the set of equilibrium points of~\eqref{eq:dynamics} as a function of preferences $\mathbf{p}$ as
\begin{equation}
	\eqptp \ldef \{ \boldz \in \real^{n} \,|\, \dot{\boldz} = \mathbf{f}(\boldz,\boldp)= \mathbf{0}
	\}.
	\label{eq:eqpt}
\end{equation}
Hereafter, we drop the preference argument in $\eqptp$ wherever there 
is no confusion. 
We first guarantee that, under the dynamics~\eqref{eq:dynamics}, 
irrespective of the initial opinion profile, the opinions never grow 
unbounded and are in fact, ultimately bounded. The ultimate boundedness 
of opinions is a consequence of the resource penalty term in 
\eqref{eq:utility}, which heavily penalizes any agent for holding opinions of greater magnitude. 

\begin{theorem} \label{thm:ultimate_bound} \thmtitle{Ultimate
    boundedness of opinions} Let $\boldz(t)$ be the solution
  to~\eqref{eq:dynamics} from the initial condition $\boldz(0)$. Then
  $\exists \, \, \eta \geq 0$ (independent of $\boldz(0)$) and
  $\exists \, \, T(\boldz(0)) \geq 0$ such that $|\zi(t)| \leq \eta$,
  $\forall i \in \agt$, $\forall t \geq T$. Additionally,
  $\exists\: \Omega \subseteq \realn$ which is convex, compact and
  positive invariant under~\eqref{eq:dynamics}. \remend
\end{theorem}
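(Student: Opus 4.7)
The plan is to use the non-smooth Lyapunov-type function $V(\boldz) \ldef \max_{i \in \agt} z_i^2$ and to show that whenever $V$ exceeds a threshold $\eta^2$ depending only on the model parameters, $V$ strictly decreases along solutions of~\eqref{eq:dynamics}. A standard comparison argument will then force $V(t) \leq \eta^2$ after a finite time, yielding $|z_i(t)| \leq \eta$ for all $i \in \agt$. The ultimate bound $\eta$ must be independent of $\boldz(0)$, which naturally suggests bounding everything in terms of $\max_i |z_i|$ alone.

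Fix $t$ and let $i^{*} \in I(t) \ldef \argmax_{i \in \agt} z_i(t)^2$, so that $|z_k(t)| \leq |z_{i^{*}}(t)|$ for every $k \in \agt$. By Danskin's theorem, the upper Dini derivative satisfies $D^{+} V(t) \leq \max_{i^{*} \in I(t)} 2 z_{i^{*}} \dot{z}_{i^{*}}$. Substituting~\eqref{eq:dynamics} and estimating term by term, the preference term contributes at most $\frac{2 w_{i^{*}} r_{i^{*}}}{B} |z_{i^{*}}||p_{i^{*}}|$; the coupling term is bounded, using $|z_k| \leq |z_{i^{*}}|$, by
$$2 z_{i^{*}} \sum_{k \in \agt} \frac{a_{i^{*}k} r_k}{B}(z_k - z_{i^{*}}) \leq 4 z_{i^{*}}^2 \sum_{k \in \agt} \frac{|a_{i^{*}k}| r_k}{B};$$
and the resource penalty contributes $-\frac{2 z_{i^{*}}^4}{r_{i^{*}}}$. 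Taking the worst case over $i \in \agt$ of each coefficient produces a uniform upper bound of the form $2 z_{i^{*}} \dot{z}_{i^{*}} \leq K_1 z_{i^{*}}^2 + K_2 |z_{i^{*}}| - K_3 z_{i^{*}}^4$, with $K_3 > 0$ and $K_1, K_2 \in \real$ depending only on the parameters $\{w_i, r_i, p_i, a_{ik}\}_{i,k \in \agt}$, not on $\boldz$ or $\boldz(0)$.

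Since the quartic term dominates for $|z_{i^{*}}|$ large, there exists $\eta \geq 0$ such that the right-hand side is strictly negative whenever $|z_{i^{*}}| > \eta$; in particular, for $V$ sufficiently large one can arrange $D^{+} V(t) \leq -\tfrac{1}{2} K_3 V(t)^2$, which integrates to give a finite hitting time $T(\boldz(0))$ of the sublevel set $\{\boldz : V(\boldz) \leq \eta^2\}$, and the same bound on $\partial\{V \leq \eta^2\}$ yields forward invariance. The main obstacle I anticipate is handling the non-smoothness of $V$ arising from switching of the maximizer $i^{*}(t)$; this is dispatched cleanly via Danskin's theorem (or Clarke's generalized gradient), after which the remainder is the elementary observation that a quartic penalty with a negative coefficient eventually dominates quadratic and linear growth in $|z_{i^{*}}|$.
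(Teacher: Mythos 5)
Your proof is correct, but it takes a genuinely different route from the paper. The paper works with the smooth quadratic function $V(\boldz)=\tfrac{1}{2}\boldz^\top\boldz$, computes $\dot V$ along trajectories, and bounds it by $\sum_{i\in\agt}\bigl[\alpha_i|z_i|+\beta_i|z_i|^2-\gamma_i|z_i|^4\bigr]$, concluding that the quartic terms dominate outside a sublevel set $\Omega_\zeta$, which is therefore positively invariant and reached in finite time; radial unboundedness then converts the sublevel-set bound into a componentwise bound $|z_i|\leq\eta$. You instead use the non-smooth max-function $V(\boldz)=\max_{i\in\agt}z_i^2$ with Dini derivatives and Danskin's theorem, exploiting the key inequality $|z_k|\leq|z_{i^*}|$ at the maximizing index to reduce everything to a single scalar quartic inequality $D^+V\leq K_1V+K_2\sqrt{V}-K_3V^2$. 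Your term-by-term estimates are all valid (you even discard a helpful negative contribution from the preference term), and the comparison with $\dot u=-\tfrac{1}{2}K_3u^2$ is sound. What your approach buys: the invariant set is the hypercube $[-\eta,\eta]^n$ with $\eta$ computable as the positive root of an explicit scalar quartic, so the componentwise bound is immediate rather than inferred through the Euclidean norm; moreover the $-V^2$ decay gives a hitting time that is uniform over initial conditions, slightly stronger than the $T(\boldz(0))$ the statement requires. What the paper's approach buys: no non-smooth analysis is needed, and the resulting $\Omega_\zeta$ is exactly the convex, closed, positively invariant set reused later in the proof of Theorem~\ref{thm:exis_uniq_of_eqm} (your hypercube would serve there equally well). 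One small point of care, shared by the paper's own argument: strict negativity of $D^+V$ on the open set $\{V>\eta^2\}$ alone does not preclude asymptotic (rather than finite-time) approach, so you should either enlarge $\eta$ slightly so that the upper bound is $\leq-\delta<0$ on a compact shell, or note that your quadratic decay estimate already handles this; as written your sketch does the latter implicitly.
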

  \par The existence of an ultimate bound can be shown using similar
  arguments involved in the proof of~\cite[Theorem
  3.1]{2023_PW_NM_PT}. Using the same
  arguments, we can also show existence of a convex, compact sublevel
  set $\Omega$ of the quadratic Lyapunov function
  $V(\boldz):=0.5\;\boldz\trpose\boldz$ such that $\dot{V}(\boldz)<0$
  outside $\Omega$ and the solution with any initial 
  condition $\boldz(0) \in \real^n$ enters the set $\Omega$ 
  in finite time $\bar{T}(\boldz(0))$. We skip the proof for brevity.

Note that one can explicitly find an ultimate bound $\eta$ but we skip 
it for brevity. Also note that Theorem~\ref{thm:ultimate_bound} holds 
for all values of the parameters and hence does not depend on the type 
of interactions among agents or the structure of the social network. 
From the proof of Theorem~\ref{thm:ultimate_bound}, we note that for 
any initial condition, the solution is uniformly bounded over all time. 
Then the local Lipschitzness of the vector field in~\eqref{eq:dynamics} 
can be used to show existence and uniqueness of solutions 
of~\eqref{eq:dynamics} for all time.
Next, we characterize the conditions under which the opinions converge 
to an equilibrium. Recall that for each $i \in \agt$, 
$\Sf_i(\cdot)$ has a unique root at $m_i$, \emph{i.e.} $\Sf_i(m_i) = 
0$. Using this, we show that the opinion of a socially closed agent (if 
one exists) converges to its corresponding $m_i$. 
The next result can be shown using~\eqref{eq:SC_sign} and the fact that the
dynamics~\eqref{eq:dynamics} for every such agent reduces to the
scalar differential equation $\dot{\zi} =\Sf(\zi)$. We skip the
proof for brevity.

\begin{lemma} \label{lem:eqm_isolated_agents}
\thmtitle{Convergence of opinions of socially closed agents}
Consider the dynamics given by \eqref{eq:dynamics_split}. Let 
$\boldz(t)$ denote the solution of~\eqref{eq:dynamics} from an initial 
condition $\boldz(0)$. 
Then, $\lim_{t \to \infty} \zi(t) = m_{i}\:;\:\forall i \in \agtiso$, 
where $\agtiso$ is the set of socially closed agents, i.e., $\agtiso \ldef \{i \in \agt\:\mid\: \Ni = \varnothing\} \subseteq \agt$.
\remend
\end{lemma}

We next address the general case, where there may be some socially closed agents and some agents influenced by others. In this case, the 
parameters, such as importance weights, resources and 
inter-agent influence weights, all play a role in determining whether 
there is an equilibrium point, its uniqueness and its stability. Using 
Theorem~\ref{thm:contraction}, we present a sufficient condition 
for~\eqref{eq:dynamics} to have a unique globally exponentially stable 
equilibrium. We begin by stating an assumption.
\begin{enumerate}[resume,label=\textbf{(A\arabic*)},wide=\parindent] 
  \item \thmtitle{Weak antagonistic relations}
  \label{asmp:weak_antagonist_relations}
  For each agent $i \in \agt$, $\wi> \enemysumi 2|\aik|$, where $\Nie$ is the set of enemies of $i$.  \bulletend
\end{enumerate}

Notice that under Assumption~\ref{asmp:param_sign}, $\wi>0, \forall i \in \agt$. Thus, the case of no antagonistic 
relations stated in Assumption~\ref{asmp:no_antagonist} is a special 
case of  Assumption~\ref{asmp:weak_antagonist_relations}. All the 
results which hold under 
Assumption~\ref{asmp:weak_antagonist_relations} also hold under 
Assumption~\ref{asmp:no_antagonist}. Now, we present the main result of 
this section. Its proof is in the appendix.
\begin{theorem}  
	\label{thm:exis_uniq_of_eqm}
	\thmtitle{Existence and uniqueness of equilibrium points} 
	Consider the dynamics given by~\eqref{eq:dynamics}. 
	Suppose Assumption~\ref{asmp:weak_antagonist_relations} holds.
Then, there exists a unique globally exponentially stable equilibrium 
point $\boldz^{*} \in \eqpt$. \remend
\end{theorem}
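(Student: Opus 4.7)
The plan is to invoke Theorem on strongly contracting systems directly on the dynamics~\eqref{eq:dynamics}. To apply it I need (i) a convex, closed, positively $\mathbf{f}$-invariant set $\Mc{C}$ and (ii) a uniform bound $\mu_\infty(\mathbf{J}(\boldz)) \leq -\alpha$ for some $\alpha>0$ on $\Mc{C}$, where $\mathbf{J}(\boldz)$ is the Jacobian of the right-hand side of~\eqref{eq:dynamics}. Since Theorem~\ref{thm:ultimate_bound} yields a convex, closed, positively invariant sublevel set $\Omega_\zeta$, or alternatively since $\Mc{C}=\real^n$ is trivially convex, closed and $\mathbf{f}$-invariant (with solutions existing for all $t\ge 0$ by the same ultimate boundedness argument together with local Lipschitzness), condition (i) is easy. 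The heart of the proof is condition (ii), which I expect to reduce cleanly to Assumption~\ref{asmp:weak_antagonist_relations}.

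Next, I would compute $\mathbf{J}(\boldz)$ entrywise from~\eqref{eq:dynamics}. Using $a_{ii}=0$, the diagonal entries are
\begin{equation*}
  J_{ii}(\boldz) = -\frac{\wi\ri}{B} - \sum_{k \in \agt \setminus\{i\}} \frac{\aik\rk}{B} - \frac{3\zi^{2}}{\ri},
\end{equation*}
and the off-diagonal entries are $J_{ij}(\boldz)=\aij\rj/B$ for $j\neq i$. Plugging into the formula for $\mu_\infty$ from Definition~\ref{def:log_norm} and using the identity $|\aik|-\aik = 2|\aik|$ for $k\in\Nie$ and $|\aik|-\aik=0$ for $k\in\Nif$, the cross terms collapse to an antagonistic-only sum:
\begin{equation*}
  \mu_\infty(\mathbf{J}(\boldz)) = \max_{i\in\agt}\bigg(\!-\frac{\wi\ri}{B} + \sum_{k\in\Nie}\frac{2|\aik|\rk}{B} - \frac{3\zi^{2}}{\ri}\bigg).
\end{equation*}

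Assumption~\ref{asmp:weak_antagonist_relations} states that for every $i\in\agt$ the quantity $\tfrac{\wi\ri}{B}-\sum_{k\in\Nie}\tfrac{2|\aik|\rk}{B}$ is strictly positive; call its minimum over $i$ the value $\alpha>0$. Since $-3\zi^{2}/\ri\le 0$, I get $\mu_\infty(\mathbf{J}(\boldz))\le -\alpha$ for every $\boldz\in\real^n$, so $\mathbf{f}$ is strongly infinitesimally contracting on $\Mc{C}$ with rate $\alpha$. Theorem~\ref{thm:contraction} then delivers the unique globally exponentially stable equilibrium $\boldz^{*}\in\Mc{C}\subseteq\eqpt$, and Theorem~\ref{thm:ultimate_bound} confirms global attraction from every initial condition.

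The only step that is not mechanical is the sign manipulation turning the mixed sum $-\sum_k \aik\rk + \sum_k|\aik|\rk$ into $2\sum_{k\in\Nie}|\aik|\rk$; this is precisely what makes Assumption~\ref{asmp:weak_antagonist_relations} the right hypothesis, and why the cubic resource-penalty term (whose Jacobian contribution is always non-positive) is not needed to secure contraction but does reinforce it. I do not anticipate genuine obstacles beyond carefully justifying the use of $\mu_\infty$ on an unbounded $\Mc{C}$, which is standard in contraction theory.
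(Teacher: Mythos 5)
Your proposal is correct and follows essentially the same route as the paper's proof: compute the Jacobian, show via the $\infty$-log norm and the sign identity $|\aik|-\aik=2|\aik|$ on $\Nie$ that Assumption~\ref{asmp:weak_antagonist_relations} gives $\mu_\infty(\mathbf{J}(\boldz))\le-\alpha<0$ uniformly, take the invariant sublevel set $\Omega_\zeta$ from Theorem~\ref{thm:ultimate_bound}, and invoke Theorem~\ref{thm:contraction}. The only quibble is the notational slip $\boldz^{*}\in\Mc{C}\subseteq\eqpt$, which should read $\boldz^{*}\in\Mc{C}$ and $\boldz^{*}\in\eqpt$.
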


\begin{remark}\thmtitle{On the weak antagonistic relationships condition}
	\label{rem:uniq_eqm}
	The condition of Assumption~\ref{asmp:weak_antagonist_relations} 
	means that the \emph{willingness} of every agent $i \in \agt$ to hold 
	an opinion close to its internal preference is greater than twice the 
	aggregate influence weight of its enemies. Under this assumption, 
	Theorem~\ref{thm:exis_uniq_of_eqm} shows that the opinions converge 
	to a unique equilibrium.
	\remend
\end{remark}

Based on Lemma~\ref{lem:eqm_isolated_agents} and 
Theorem~\ref{thm:exis_uniq_of_eqm}, we can give a stronger 
result for ultimate boundedness and on the location of the unique 
equilibrium in case there are no antagonistic relations among agents. 
We deal with this case next.

\subsubsection*{Convergence of opinions in the absence of antagonistic 
relations} 

In the absence of antagonistic relationships among agents, 
it is possible to give an ultimate bound that is more intuitive and 
easier to compute using the unique roots of the self functions 
($m_i$ such that $\Sf_i(m_i) = 0$, $i \in \agt$). We can also say that 
the unique equilibrium in this case is in a specific set defined by 
$m_i$'s, for $i \in \agt$. Let,
\begin{align} 
	m_{\min}(\boldp):=\min \{m_{i}\}_{i \in \agt}, \, m_{\max}(\boldp):=\max \{m_{i}\}_{i \in \agt},
	\label{eq:m_min_max}
\end{align}
and the corresponding interval
\begin{align} 
	\Mc{M}(\boldp):=[m_{\min}(\boldp),m_{\max}(\boldp)]\,.
	\label{eq:m_interval}
\end{align}
 For brevity, we will exclude the preference argument in~\eqref{eq:m_min_max} and~\eqref{eq:m_interval} wherever there is no confusion. We are now ready to show that the opinions converge to~$\Mc{M}^n$ (proof in Appendix).
 
\begin{proposition}\label{prop:ub_c_pos}
	\thmtitle{Convergence to the set $\Mc{M}^n$ in absence of antagonistic relations} \label{prop:conv_to_compact_set}
	Consider the opinion dynamics given by \eqref{eq:dynamics}. 
	Suppose Assumption~\ref{asmp:no_antagonist} holds. Let $m_{\min}$, 
	$m_{\max}$ and $\Mc{M}$ be as defined 
	in~\eqref{eq:m_min_max} and~\eqref{eq:m_interval} respectively. Then, 
	$\Mc{M}^n$ is positively invariant under the opinion 
	dynamics~\eqref{eq:dynamics}. Let $\boldz(t)$ be the 
	solution to \eqref{eq:dynamics} from an initial 
	condition $\boldz(0) \in \real^n$. Then $\boldz(t)$ converges to 
	$\Mc{M}^n$. Further, define
	$\agtmmax \ldef \{i \in \agt \mid m_{i}=m_{\max}\}$ and $\agtmmin \ldef \{i \in \agt \mid m_{i}=m_{\min}\}$. Suppose $m_{\min} < m_{\max}$. Then the following statements are equivalent.
	\begin{itemize}[wide=\parindent]
		\item[(i)] The unique equilibrium $\boldz^* \in \eqpt$ lies in the 
		interior of $\Mc{M}^n$ and $\exists \, T(\boldz(0)) \geq 0$ such 
		that $\boldz(t) \in 
		\Mc{M}^n$, $\forall t \geq T(\boldz(0))$.
		\item[(ii)] $\forall i \in \agtmmax$, $\exists$ a directed walk in 
		$\mathsf{G}$ starting from $j \in \agt \setminus \agtmmax$ to $i$ 
		and $\forall i \in \agtmmin$, $\exists$ a directed walk in 
		$\mathsf{G}$ starting from $j \in \agt \setminus \agtmmin$ to 
		$i$.~\remend
		\end{itemize}
\end{proposition}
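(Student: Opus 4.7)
My plan is to handle the three assertions in turn---positive invariance of $\Mc{M}^n$, convergence of $\boldz(t)$ to $\Mc{M}^n$, and the equivalence (i)$\iff$(ii)---using Theorem~\ref{thm:exis_uniq_of_eqm} as the main tool. Since Assumption~\ref{asmp:no_antagonist} implies Assumption~\ref{asmp:weak_antagonist_relations}, that theorem provides a unique globally exponentially stable equilibrium $\boldz^* \in \real^n$ of~\eqref{eq:dynamics}.

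First I would verify positive invariance via a boundary-tangency check. At a point of $\Mc{M}^n$ with $\zi = \mmax$ and all $\zk \leq \mmax$, strict monotonicity of $\Sf_i$ together with $m_i \leq \mmax$ gives $\Sf_i(\mmax) \leq 0$; and since under Assumption~\ref{asmp:no_antagonist} the quantity $\zbari$ is a convex combination of $\{\zk\}_{k \in \Ni}$, we get $\zbari \leq \mmax$, hence $\Cf_i^{+}(\boldz) \leq 0$. Thus $\dot{\zi} \leq 0$, and symmetrically $\dot{\zi} \geq 0$ on the face $\zi = \mmin$. Convexity and closedness of $\Mc{M}^n$ then give positive invariance by a standard Nagumo-type argument. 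Since trajectories starting in $\Mc{M}^n$ converge to $\boldz^*$ and $\Mc{M}^n$ is closed, $\boldz^* \in \Mc{M}^n$, so every trajectory of~\eqref{eq:dynamics} converges to $\boldz^* \in \Mc{M}^n$.

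For (ii)$\Rightarrow$(i) I would show $z_i^* \in (\mmin, \mmax)$ for every $i$. Suppose to the contrary that $z_i^* = \mmax$. At equilibrium $\Sf_i(\mmax) + \Cf_i^{+}(\boldz^*) = 0$; both summands are nonpositive by the invariance computation, so both vanish. Vanishing of $\Sf_i(\mmax)$ forces $m_i = \mmax$, and using $a_{il} \geq 0$ under no antagonism, vanishing of $\Cf_i^{+}(\boldz^*)$ forces $z_l^* = \mmax$ for every in-neighbor $l$ of $i$. Iterating this backwards along any directed walk ending at $i$ shows every node on that walk lies in $\agtmmax$. Applying this to the walk from some $j \in \agt \setminus \agtmmax$ to $i$ guaranteed by (ii) yields $j \in \agtmmax$, a contradiction. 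A symmetric argument rules out $z_i^* = \mmin$. Hence $\boldz^* \in \text{int}(\Mc{M}^n)$, and the finite-time entry asserted in (i) follows by combining exponential convergence to $\boldz^*$ with a ball around $\boldz^*$ contained in $\Mc{M}^n$ and using positive invariance.

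For (i)$\Rightarrow$(ii) I would argue the contrapositive. Suppose some $i \in \agtmmax$ has no directed walk from $\agt \setminus \agtmmax$ to it (the $\agtmmin$ case is symmetric). Let $S$ be the set of nodes possessing a directed walk to $i$ together with $i$ itself; then $S \subseteq \agtmmax$, and $S$ has no incoming edges from $\agt \setminus S$, since any such source would inherit a walk to $i$. Consequently $a_{kl} = 0$ whenever $k \in S$ and $l \notin S$, so the restriction of~\eqref{eq:dynamics_no_antagonists} to $S$ is a self-contained subsystem of the same structural form as~\eqref{eq:dynamics} that trivially satisfies Assumption~\ref{asmp:weak_antagonist_relations}; the contraction argument underlying Theorem~\ref{thm:exis_uniq_of_eqm} therefore yields a unique equilibrium on $S$. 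Since $m_k = \mmax$ for every $k \in S$, the constant profile $z_k = \mmax$ clearly satisfies the restricted equilibrium equations, so it must coincide with $\boldz^*|_S$; in particular $z_i^* = \mmax$, which places $\boldz^*$ on $\partial \Mc{M}^n$ and violates (i). The main obstacle is precisely this decoupling argument---observing that negating (ii) produces a forward-closed subset $S$ on which the dynamics is autonomous and on which uniqueness pins the equilibrium to the boundary value $\mmax$.
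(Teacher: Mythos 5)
Your proposal is correct and its overall skeleton matches the paper's: positive invariance via the sign check of $\Sf_i$ and $\Cf_i^{+}$ on the faces $z_i=m_{\max}$ and $z_i=m_{\min}$, convergence via the unique globally exponentially stable equilibrium from Theorem~\ref{thm:exis_uniq_of_eqm}, and the equivalence via the rigidity of the equilibrium conditions ($z_i^*=m_{\max}$ forces $m_i=m_{\max}$ and $z_j^*=m_{\max}$ for every in-neighbor $j$, which is exactly the paper's statement \textbf{(S1)}). Two sub-steps differ. For membership $\boldz^*\in\Mc{M}^n$ you use positive invariance plus closedness of $\Mc{M}^n$ and attraction of trajectories started inside it, whereas the paper argues directly that no point outside $\Mc{M}^n$ can be an equilibrium; both are fine. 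More substantively, for the direction (i)$\Rightarrow$(ii) the paper only asserts, from \textbf{(S1)} and \textbf{(S2)}, that $z_i^*<m_{\max}$ iff a suitable directed walk exists, leaving the ``only if'' half implicit; you make it explicit by constructing the in-closed set $S\subseteq\agtmmax$ of nodes with walks to $i$, observing that the dynamics restricted to $S$ is autonomous and itself satisfies Assumption~\ref{asmp:weak_antagonist_relations}, and invoking uniqueness of the subsystem equilibrium to force $\boldz^*|_S=m_{\max}\boldone$. That decoupling argument is a clean and slightly more rigorous way to close the step the paper glosses over, at the cost of re-invoking the contraction machinery on a subsystem; the paper's route is shorter but relies on the reader propagating \textbf{(S1)} through the graph themselves.
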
  

Under Assumption~\ref{asmp:no_antagonist}, 
Proposition~\ref{prop:ub_c_pos} guarantees that the set $\Mc{M}^n$ is 
positively invariant. Moreover, the unique globally exponentially 
stable equilibrium point $\boldz^*$ that 
Theorem~\ref{thm:exis_uniq_of_eqm} guarantees lies in $\Mc{M}^n$. Thus, 
we can immediately guarantee that $\boldz(t)$ converges to $\Mc{M}^n$, 
possibly asymptotically. If $\mmin<\mmax$ then only agents belonging to 
sets $\agtmmax$ and $\agtmmin$ can have equilibrium opinions at the 
boundary of the set $\Mc{M}$. For the unique equilibrium to be in the 
interior of $\Mc{M}^{n}$, it is both necessary and sufficient that 
every agent in the sets $\agtmmax$ and $\agtmmin$ is directly or 
indirectly influenced by at least one agent in sets $\agt \setminus 
\agtmmax$ and $\agt \setminus \agtmmin$, respectively. Note that this 
condition is satisfied if the social network $\mathsf{G}$ is strongly 
connected. Finally, if the unique equilibrium $\boldz^*$ lies in the 
interior of $\Mc{M}^n$ then solutions converge to $\Mc{M}^n$ in finite 
time. In this case, the ultimate bound $\Mc{M}^n$ has an additional 
advantage that it depends only on the $m_i$'s (whose interpretation is 
provided in Section~\ref{sec:modeling_and_problem_setup}) and hence can 
be computed easily using the parameters $w_i, p_i$ and $r_i$. Finally, 
note that if $m_{\max}=m_{\min}$ then the opinions of agents in absence 
of antagonistic relations always converge to a unique consensus 
equilibrium.

\section{Consensus and Nash Equilibria} 
\label{sec:eqpt_and_game_theoretic_analysis}

In this section, we analyze consensus equilibria of the opinion
dynamics~\eqref{eq:dynamics} and Nash equilibria of the underlying
game. We also explore the relation between the Nash equilibrium set of
the underlying game and the set of equilibria of the opinion
dynamics~\eqref{eq:dynamics}. Finally, we also analyze the price of
anarchy.

\subsection{Consensus Equilibria}

First, we deal with the consensus equilibria of the model, i.e.,
equilibria of the form $\xi \one$, with $\xi \in \real$. We refer to
the case of $\xi = 0$ as a \emph{neutral consensus} since all the 
agents have neutral opinions (equal to $0$) in this case. On the other 
hand, we refer to the case of $\xi \neq 0$ as a non-neutral consensus. 
In the following lemma, we present conditions for \eqref{eq:dynamics} 
to have a consensus equilibrium. We use the form of the dynamics in
\eqref{eq:dynamics_split} and the functions in \eqref{eq:func_split}
to justify our claims.

\begin{theorem}\label{thm:consensus_iff}
	\thmtitle{Necessary and sufficient conditions for existence of a
		consensus equilibrium} 
	Consider the dynamics \eqref{eq:dynamics} (equivalently~\eqref{eq:dynamics_split}). For each $i \in \agt$, let
	$m_i \in \real$ be the unique point such that $\Sf_i(m_i) =
	0$. Then, $\xi \one \in \eqpt$ if and only if $m_i = \xi$,
	$\forall i \in \agt$.	
\end{theorem}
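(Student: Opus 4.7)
The proof is essentially a direct substitution using the decomposition $\dot z_i = \Sf_i(\zi) + \Cf_i(\boldz)$ from \eqref{eq:dynamics_split}. The plan is to evaluate $\mathbf{f}(\xi\one,\boldp)$ componentwise and exploit the fact that the crowd function annihilates any consensus configuration.

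The first observation is that for any $\xi \in \real$ and any $i \in \agt$, substituting $\boldz = \xi \one$ into~\eqref{eq:func_consensus} yields
\[
\Cf_i(\xi \one) = \sum_{k \in \agt} \frac{\aik \rk}{B}\,(\xi - \xi) = 0,
\]
regardless of the signs or magnitudes of the $\aik$'s and irrespective of the graph topology. Consequently, $\dot{z}_i\bigm|_{\boldz = \xi\one} = \Sf_i(\xi)$ for every $i \in \agt$, so the equilibrium condition $\dot{\boldz} = \boldzero$ at $\xi \one$ collapses to the decoupled scalar conditions $\Sf_i(\xi) = 0$ for each $i \in \agt$.

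The second observation is the uniqueness of the root of $\Sf_i(\cdot)$, which was already established in the discussion following~\eqref{eq:func_split}: since $\Sf_i(\cdot, \di)$ is continuous, strictly decreasing, and surjective onto $\real$, it has exactly one real root, namely $m_i$. Hence $\Sf_i(\xi) = 0$ is equivalent to $\xi = m_i$. Putting both observations together, $\xi\one \in \eqpt$ holds if and only if $m_i = \xi$ for every $i \in \agt$, which yields both directions of the iff simultaneously. I do not anticipate any real obstacle here; the entire argument hinges on the cancellation $\Cf_i(\xi\one) = 0$ and the single-root property of $\Sf_i$, both of which are already in hand.
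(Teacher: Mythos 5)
Your proposal is correct and follows exactly the paper's argument: evaluate the crowd function at $\xi\one$ to see it vanishes, reduce the equilibrium condition to $\Sf_i(\xi)=0$ for all $i$, and invoke the uniqueness of the root $m_i$ of the strictly decreasing $\Sf_i$. No gaps; nothing further to add.
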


\begin{proof}
	If $\z_i = \xi$, $\forall i \in \agt$, for some
	$\xi \in \real$ then $\Cf_i(\xi\one) = 0$, $\forall i \in
	\agt$. Hence, from \eqref{eq:dynamics_split}, $\xi\one \in \eqpt$
	iff $\Sf_i(\xi) = 0, \forall i \in \agt$. Since
	$m_i$ is the unique root of $\Sf_i(.)$, the claim follows.
\end{proof}
\begin{remark}\label{rem:consensus}
	\thmtitle{Consensus formation among agents} %
	Theorem~\ref{thm:consensus_iff} states that it is both necessary and
	sufficient for all the $m_i$'s to be the same for the opinion dynamics
	model to have a consensus equilibrium. It is evident that if the
	agents are to arrive at a consensus equilibrium, then all their
	preferences $\di$'s must be of the same sign. When $\di=0$, $\forall i \in \agt$, then the only possible consensus equilibrium is the neutral consensus, \emph{i.e.}, every agent
	reaches a neutral opinion on the topic. If the preferences of the agents
	have different signs, then the opinions of agents can never reach an
	exact consensus in equilibrium. However, other equilibria that are
	arbitrarily close to consensus may still exist.
	
	Now, if the weak antagonistic relationships condition given in Assumption~\ref{asmp:weak_antagonist_relations} holds and there exists a consensus equilibrium,
	then, from Theorem~\ref{thm:exis_uniq_of_eqm}, it is the only 
	equilibrium of the dynamics and
	the agents always achieve consensus starting from any initial opinion vector. 
	 \remend
\end{remark}

When the agents attain a consensus equilibrium, we can measure how
much influence an agent has on the whole group by measuring the
deviation of the consensus value from its preference. Note that if
$p_i = 0$, for some $i \in \agt$, then $m_i = 0$ and hence the only
consensus equilibrium possible (if it exists) is neutral. So we
consider $p_i \neq 0$, $\forall i \in \agt$ to give the next result on
dominance and discuss it in the remark following it.

\begin{proposition} \label{prop:dominance} \thmtitle{Consensus
    deviation from preference} %
  Consider the dynamics \eqref{eq:dynamics} or equivalently
  \eqref{eq:dynamics_split}. Suppose that $p_i \neq 0$,
  $\forall i \in \agt$. For each agent $i \in \agt$, let us define
  $\sigma_{i}\ldef \wi \ri$ and
    $\Delta_{i}(\xi)\ldef|\di-\xi|$. If $\xi\one \in \eqpt$, with
  $\xi \in \real$, then
  $\sigma_{i} \Delta_{i}(\xi) = \sigma_{j} \Delta_{j}(\xi)$,
  $\forall i,j \in \agt$. In particular, $\sigma_{i}>\sigma_{j}$
  iff $\Delta_{i}(\xi)<\Delta_{j}(\xi)$.
\end{proposition}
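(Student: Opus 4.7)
The plan is to exploit the explicit algebraic form of the self function $\Sf_i$ together with Theorem~\ref{thm:consensus_iff}, which reduces the consensus hypothesis to the identity $m_i = \xi$ for every $i \in \agt$. Once we write out what $\Sf_i(m_i) = 0$ means in terms of the parameters $w_i, r_i, p_i$, the result essentially drops out.

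First, I would observe that since $\xi\one \in \eqpt$, Theorem~\ref{thm:consensus_iff} forces $m_i = \xi$ for all $i \in \agt$. Plugging this into the definition of $\Sf_i$ in \eqref{eq:func_self} gives
\begin{equation*}
	-\frac{w_i r_i}{B}(\xi - p_i) - \frac{\xi^3}{r_i} = 0,
\end{equation*}
and rearranging (multiplying through by $r_i$) yields the clean identity
\begin{equation*}
	\sigma_i (p_i - \xi) = B\, \xi^3, \qquad \forall i \in \agt,
\end{equation*}
with $\sigma_i = w_i r_i^2$. Taking absolute values and using $\sigma_i > 0$ immediately gives $\sigma_i \Delta_i(\xi) = B |\xi|^3$, whose right-hand side is independent of $i$. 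This proves the first equality $\sigma_i \Delta_i(\xi) = \sigma_j \Delta_j(\xi)$ for all $i,j \in \agt$.

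For the ``in particular'' equivalence, I would first argue that the hypothesis $p_i \neq 0$ for all $i \in \agt$ rules out $\xi = 0$: otherwise the displayed identity would force $\sigma_i p_i = 0$, contradicting $p_i \neq 0$ and $\sigma_i > 0$. Hence $|\xi|^3 > 0$, which in turn gives $\Delta_i(\xi) = B|\xi|^3 / \sigma_i > 0$ for every $i$. Dividing $\sigma_i \Delta_i(\xi) = \sigma_j \Delta_j(\xi)$ by the positive quantity $\sigma_j \Delta_i(\xi)$ then delivers $\sigma_i / \sigma_j = \Delta_j(\xi)/\Delta_i(\xi)$, from which the biconditional $\sigma_i > \sigma_j \iff \Delta_i(\xi) < \Delta_j(\xi)$ follows.

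There is no real obstacle here once the consensus hypothesis is converted via Theorem~\ref{thm:consensus_iff} into $m_i = \xi$; the only care point is ensuring positivity of $\Delta_i(\xi)$ before turning the equality into a strict inequality, which is exactly where the assumption $p_i \neq 0$ is used.
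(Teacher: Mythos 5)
Your proof is correct and follows essentially the same route as the paper: invoke Theorem~\ref{thm:consensus_iff} to get $m_i=\xi$, write out $\Sf_i(\xi)=0$ to obtain $\sigma_i(p_i-\xi)=B\xi^3$ with an $i$-independent right-hand side, and rule out $\xi=0$ before passing to the strict biconditional. (Incidentally, your identity carries the factor of $B$ that the paper's displayed equation drops; this is immaterial since the right-hand side is independent of $i$ either way.)
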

\begin{proof}
	Since $p_i \neq 0$, $\forall i \in \agt$, we also have
	$0 < | m_i | < | p_i |$ $\forall i \in \agt$. Then by
	Theorem~\ref{thm:consensus_iff}, $\xi \neq 0$. Further, from
	Theorem~\ref{thm:consensus_iff}, we get
	$r_i \Sf_{i}(\xi)=0, \ \forall i \in \agt$, which then implies
	\begin{equation*}
		\sigma_{i}(\di-\xi) = \xi^3 = \sigma_{j}(p_j-\xi), \ \forall i,j \in \agt.
	\end{equation*}
	Since $\sigma_{i}>0, \ \forall i \in \agt$, the result now follows.
\end{proof}

\begin{remark}\label{rem:dominance}
	\thmtitle{Dominance in consensus} %
	Let us call $\forall i \in \agt$, the scalar $\sigma_{i}$ 
	as the \emph{consensus dominance} weight of the agent
	$i$.  Suppose all agents have a non-neutral preference and they
	attain consensus at $\xi \in \real$. Then, Proposition~\ref{prop:dominance} states
	that if an agent $i \in \agt$ has higher consensus dominance weight 
	than agent $j \in \agt$, then $\xi$ is closer to $i$'s preference
	than that of $j$. Note that the consensus dominance weight is directly
	proportional to the importance weight an agent assigns to its preference and the resources available to it. Thus, an agent with very 
	high resources can exert more influence even if it gives less importance weight 
	to its internal preference. On the other hand, if an agent has lower 
	resources, then it has to have much higher importance weight to have 
	more influence in the group.  \remend
\end{remark}

Next, we study the relation between equilibria of
the opinion dynamics~\eqref{eq:dynamics} and Nash equilibria of the 
underlying game.

\subsection{Nash Equilibria}

Here we carry out a Nash Equilibrium analysis of the opinion formation
game. Recall that every agent is interested in maximizing its utility
$U_i$ given in \eqref{eq:utility} by suitably choosing its opinion
$z_i$. Thus, this gives a \emph{strategic form game}
$\mathcal{G}=\langle \agt, \left(\real\right)_{i \in \agt}, 
\left(U_{i}\right)_{i \in \agt}\rangle$ 
among the set of agents
$\agt$, with the \emph{strategy} of agent $i \in \agt$ being its opinion
$z_i \in \real$ and its utility function being $U_i(\cdot)$. For the
sake of convenience, we let $\znoti$ denote the opinions of all agents
other than $i$. Then, the set of \emph{Nash equilibria} of the game
$\mathcal{G}$ is
\begin{align}
	\notag \nep & \ldef \{ \boldz^* \in \real^n \,\,|\,\, \forall i \in \agt, \\
	& \quad U_i(z^*_i,\znoti^*,\di) \geq U_i(z_i,\znoti^*,\di), \forall z_i \in \real \}\,. \label{eq:ne}
\end{align}
Note that for a Nash equilibrium $\boldz^*$, $z^*_i$ is agent $i$'s
best response over all opinions $z_i \in \real$ to $\znoti^*$, the
opinion profile of all the other agents. However, in the
dynamics~\eqref{eq:dynamics}, each agent updates its opinion according
to the gradient ascent of its utility with respect to its opinion while
assuming that the other agents do not change their opinions. Hence,
the agents at each time instant revise their opinion only towards the
``local''
best response. This motivates the next definition.

\begin{define}\label{def:lne}
	\thmtitle{Local Nash equilibrium}
	A strategy profile $\boldz^* \in \real^n$ is said to be a local Nash equilibrium if and only if $\forall i \in \agt$, $\exists \,\, \rho_i \in \real_{>0}$ such that
	\begin{align*}
		U_i(z^*_i,\znoti^*,\di) \geq U_i(z_i,\znoti^*,\di), \,\, \forall\, z_i \,\,\mathrm{s.t.}\,\, |z_i^* - z_i| \leq \rho_i\,.
	\end{align*}
	The set of local Nash equilibria of $\mathcal{G}$ is denoted by $\lnep$. 
	\bulletend
\end{define}

 For simplicity, we will exclude the preference arguments in $\nep$ and $\lnep$ wherever there is no confusion.
It is easy to see that a Nash equilibrium is also a local 
Nash equilibrium and hence $\ne \subseteq \lne$. In the next 
result, we show that every local Nash equilibrium is an 
equilibrium point of~\eqref{eq:dynamics}.
\begin{lemma}\label{lem:lne_subset_eqpt}
  \thmtitle{Local Nash equilibrium is an equilibrium of the opinion
    dynamics} If an opinion profile $\boldz^{*}$ is such that
  $\boldz^* \in \lne$ then $\boldz^* \in \eqpt$.
\end{lemma}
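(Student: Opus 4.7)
The plan is to exploit the fact that the opinion dynamics in~\eqref{eq:dynamics} are, by construction, the gradient ascent of each agent's utility with respect to its own opinion. Concretely, from~\eqref{eq:utility} we have
\begin{align*}
\frac{\partial U_i}{\partial z_i}(\boldz,\di)
= -\frac{\wi\ri}{B}[\zi - \di] + \sum_{k \in \agt}\frac{\aik\rk}{B}[\zk - \zi] - \frac{\zi^3}{\ri} = f_i(\boldz, \di),
\end{align*}
so that $\dot{\z}_i = \partial U_i/\partial \z_i$. Thus an opinion profile $\boldz^*$ lies in $\eqpt$ precisely when $\partial U_i/\partial z_i$ vanishes at $\boldz^*$ for every $i \in \agt$.

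With this observation, the proof is a direct application of the first-order necessary condition for a local maximum. The steps I would carry out are: fix any $\boldz^* \in \lne$ and any $i \in \agt$. By Definition~\ref{def:lne}, there exists $\rho_i > 0$ such that $z_i^*$ is a maximizer of the map $z_i \mapsto U_i(z_i, \znoti^*, \di)$ over the open interval $(z_i^* - \rho_i,\, z_i^* + \rho_i) \subset \real$. Since $U_i$ is a polynomial in $z_i$ (hence $C^\infty$), the necessary condition for an interior local maximum yields $\partial U_i/\partial z_i (z_i^*, \znoti^*, \di) = 0$, i.e., $f_i(\boldz^*, \di) = 0$. Since $i \in \agt$ was arbitrary, $\mathbf{f}(\boldz^*, \boldp) = \boldzero$, so $\boldz^* \in \eqpt$ by~\eqref{eq:eqpt}.

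There is no real obstacle here: the argument is just the standard fact that gradient descent/ascent equilibria coincide with critical points, and local optima are critical points when the utility is smooth and the strategy space is an open set (here, all of $\real$). The one thing worth being careful about is to invoke the \emph{local} (rather than global) optimality condition, which is exactly why the lemma is stated for $\lne$ and not only for $\ne$; this is consistent with the earlier remark that $\ne \subseteq \lne$ but the dynamics only encodes local best response behavior. No use of Assumption~\ref{asmp:weak_antagonist_relations} or~\ref{asmp:no_antagonist} is needed, so the inclusion $\lne \subseteq \eqpt$ holds for arbitrary signed networks.
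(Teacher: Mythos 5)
Your proposal is correct and follows essentially the same route as the paper's own proof: local Nash optimality of each $z_i^*$ gives the first-order condition $\partial U_i/\partial z_i = 0$ at $(\z_i^*,\znoti^*)$, which by construction of the gradient-ascent dynamics in~\eqref{eq:dynamics} means $f_i(\boldz^*,\di)=0$ for all $i$, i.e.\ $\boldz^*\in\eqpt$. The extra care you take (openness of the strategy space, smoothness of $U_i$, and noting that no network assumptions are needed) is consistent with, and slightly more explicit than, the paper's argument.
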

\begin{proof}
	Since $\boldz^* \in \lne$, it implies that $\forall i \in 
	\agt$, $z_i^*$ locally maximizes $U_i(\cdot,\znoti^*)$. 
	Thus, the partial derivative of $U_i(\cdot)$ with respect 
	to $z_i$ evaluated at $(z_i^*,\znoti^{*})$ is \emph{zero}. The claim 
	then follows immediately from \eqref{eq:dynamics} and 
	\eqref{eq:eqpt}.
\end{proof} 

Lemma~\ref{lem:lne_subset_eqpt} states that every local Nash 
equilibrium of $\mathcal{G}$ is also an equilibrium point of 
dynamics \eqref{eq:dynamics}. But the converse need not be 
true. In the following result, we give conditions for an 
opinion profile $\boldz^{*} \in \realn$ that is an 
equilibrium point of the opinion dynamics to be a local 
Nash equilibrium of the opinion formation game.

\begin{theorem}\label{thm:local_nash_eq}
	\thmtitle{Conditions for an equilibrium point of \eqref{eq:dynamics}
		to be a local Nash equilibrium} %
	Consider the dynamics \eqref{eq:dynamics} and the set of equilibrium
	points $\eqpt$ in \eqref{eq:eqpt}. 
	Let $\boldz^{*}=\left(\zi^{*},\znoti^{*}\right) \in \eqpt$. Then, 
	$\boldz^* \in \lne$ only if for each $i \in \agt$,
      \begin{equation}
        (z_i^*)^2 \geq \tau_i := \frac{\ri}{3} 
        \left[\enemysumi |\aik| - \friendsumi \aik - \wi\right].
        \label{eq:lne_inequality}
      \end{equation}
	Moreover, if inequality~\eqref{eq:lne_inequality} is strict $\forall i \in \agt$, 
	then $\boldz^* \in \lne$.
\end{theorem}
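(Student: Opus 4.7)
The plan is to exploit the fact that, at any $\boldz^*\in\eqpt$, the first-order stationarity condition $\partial U_i/\partial z_i\bigl(z_i^*,\znoti^*\bigr)=0$ is built into the definition of the dynamics~\eqref{eq:dynamics}. Consequently, the additional hypothesis for $\boldz^*$ to lie in $\lne$ is purely the second-order behavior of the single-variable map $z_i\mapsto U_i(z_i,\znoti^*,\di)$ at $z_i^*$. So I would reduce the proof to a straightforward application of the second-order necessary (for the ``only if'' part) and sufficient (for the strict case) conditions from one-dimensional unconstrained optimization.

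First I would write down $\partial^{2}U_i/\partial z_i^{2}$ by direct differentiation of~\eqref{eq:utility}. The stubbornness term contributes $-\wi\ri/B$, the crowd term contributes $-\sum_{k\in\agt}\aik\rk/B$ (the cross terms involving $z_k$ drop out after two derivatives in $z_i$), and the quartic resource penalty contributes $-3z_i^{2}/\ri$. Splitting the crowd sum according to Definition~\ref{def:neighbor_set}, and using that $\aik>0$ for $k\in\Nif$ and $\aik=-|\aik|$ for $k\in\Nie$ (and $\aii=0$), this becomes
\begin{equation*}
\frac{\partial^{2}U_i}{\partial z_i^{2}}(\boldz,\di)
= -\frac{\wi\ri}{B} - \friendsumi \frac{\aik\rk}{B} + \enemysumi \frac{|\aik|\rk}{B} - \frac{3z_i^{2}}{\ri}.
\end{equation*}

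For the necessary direction, suppose $\boldz^*\in\lne$. By Definition~\ref{def:lne}, $z_i^*$ is an unrestricted local maximizer of the smooth scalar function $z_i\mapsto U_i(z_i,\znoti^*,\di)$. The second-order necessary condition then forces $\partial^{2}U_i/\partial z_i^{2}\bigl(z_i^*,\znoti^*\bigr)\leq 0$, and rearranging the displayed expression above gives exactly $(z_i^*)^2\geq \tau_i$ with $\tau_i$ as in~\eqref{eq:lne_inequality}. For the sufficient direction, assume $\boldz^*\in\eqpt$ and that the inequality is strict for every $i$. Strictness yields $\partial^{2}U_i/\partial z_i^{2}\bigl(z_i^*,\znoti^*\bigr)<0$, and combined with the first-order condition $\partial U_i/\partial z_i\bigl(z_i^*,\znoti^*\bigr)=0$ inherited from $\boldz^*\in\eqpt$, the standard second-order sufficient condition yields a strict local maximum of $U_i(\cdot,\znoti^*,\di)$ at $z_i^*$, which is precisely $\boldz^*\in\lne$.

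There is no real obstacle: the entire argument is a second-order test once the Hessian diagonal entry is computed and its sign dissected using $\Nif$ and $\Nie$. The only care needed is bookkeeping of signs when converting $\sum_{k\in\agt}\aik\rk$ into $\friendsumi\aik\rk-\enemysumi|\aik|\rk$, and being explicit that the necessary condition gives only a non-strict inequality (hence only ``only if'' in the first statement) while strictness of the Hessian is what upgrades a critical point to a strict local maximizer in the converse.
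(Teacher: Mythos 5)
Your proposal is correct and follows essentially the same route as the paper's own proof: the first-order condition comes for free from $\boldz^*\in\eqpt$, and the computation of $\partial^{2}U_i/\partial z_i^{2} = \tfrac{3}{r_i}\left[\tau_i-(z_i^*)^2\right]$ together with the second-order necessary and (strict) sufficient conditions yields both directions exactly as in the paper.
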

	\begin{proof}
	Let the hypothesis be true. From Definition~\ref{def:lne} we know 
	that $\boldz^{*}=\left(\zi^{*},\znoti^{*}\right)$ is a local Nash 
	equilibrium if and only if $\forall i \in \agt$, $z_i^*$ locally 
	maximizes $U_i(\cdot,\znoti^*)$. Now since $\boldz^* \in \eqpt$, by 
	the definitions in \eqref{eq:dynamics} and \eqref{eq:eqpt}, it is 
	clear that for each $i \in \agt$, $z_i^*$ satisfies the first-order 
	necessary conditions for it to be a local maximizer of 
	$U_i(\cdot,\znoti^*)$.
	
	Now, suppose that $\boldz^* \in \lne$. Then, we have that
	\begin{align}
		\left.\frac{\partial^{2}}{\partial\zi^{2}}U_{i}(\zi,\znoti^{*})\right|_{\zi^{*}}
		  & = \frac{3}{r_i} \left[ \tau_i - (z_i^*)^2 \right] 
		  \leq 0,\quad \forall i \in \agt\,.
		\label{eq:second_derivative}
	\end{align}
	This proves the necessary condition in \eqref{eq:lne_inequality}. 
	Finally, note that if~\eqref{eq:second_derivative} is a strict 
	inequality for a $\boldz^* \in \eqpt$, then by the second order 
	sufficiency condition for a point to be a local maximizer, $z_i^*$ is a local
	maximizer of $U_{i}(\cdot,\znoti^{*})$ for each $i \in \agt$.
\end{proof}

The statement of the previous result can be combined with the result in Theorem~\ref{thm:ultimate_bound} to provide a condition for which the opinion formation game does not have any local Nash equilibrium. We state this in the next result, proof of which is intuitive since no equilibrium point of \eqref{eq:dynamics} can exist beyond any ultimate bound (which always exists). 
\begin{corollary}\label{cor:non_existence_of_nash_eq}
	\thmtitle{Non-existence of local Nash equilibria} %
	Suppose $\eta_{i}>0$ is an ultimate bound on $\zi$ for any $i \in 
	\agt$ under \eqref{eq:dynamics}. If there exists $i \in \agt$ such 
	that
	$\tau_{i}>\eta_{i}^{2}$, with $\tau_{i}$ defined in 
	\eqref{eq:lne_inequality}, then
	$\lne = \varnothing$. \proofend
\end{corollary}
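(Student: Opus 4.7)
The plan is to argue by contradiction. Suppose, toward a contradiction, that $\lne \neq \varnothing$, and pick any $\boldz^{*} \in \lne$. The goal is to derive a contradiction with the hypothesis that some index $i \in \agt$ has $\tau_i > \eta_i^2$.

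First, I would invoke Lemma~\ref{lem:lne_subset_eqpt} to conclude that $\boldz^{*} \in \eqpt$. The key (and only nontrivial) observation is then that the constant trajectory $\boldz(t) \equiv \boldz^{*}$ is itself a solution of~\eqref{eq:dynamics}. Since $\eta_j$ is by hypothesis an ultimate bound on $z_j$ under~\eqref{eq:dynamics} for each $j \in \agt$, applying the ultimate-bound condition to this constant trajectory (taking $t$ large enough so that the ultimate-bound inequality holds) gives $|z_j^{*}| \leq \eta_j$, and therefore $(z_j^{*})^2 \leq \eta_j^2$ for every $j \in \agt$.

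Next, I would apply the necessary condition of Theorem~\ref{thm:local_nash_eq}: since $\boldz^{*} \in \lne$, inequality~\eqref{eq:lne_inequality} gives $(z_j^{*})^2 \geq \tau_j$ for every $j \in \agt$. Chaining the two inequalities yields $\tau_j \leq \eta_j^2$ for all $j \in \agt$, which directly contradicts the assumption that there exists $i \in \agt$ with $\tau_i > \eta_i^2$. Hence $\lne = \varnothing$, as claimed.

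There is no real obstacle here; the argument is essentially a three-line composition of the two earlier results. The only subtlety worth flagging in the write-up is the justification that an equilibrium point itself satisfies the ultimate bound, which is immediate from the definition of ultimate boundedness applied to the constant-in-time solution, but is the pivot that glues Theorem~\ref{thm:ultimate_bound} to Theorem~\ref{thm:local_nash_eq}.
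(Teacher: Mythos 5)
Your proof is correct and follows exactly the route the paper intends: the paper omits a formal proof, noting only that ``no equilibrium point of \eqref{eq:dynamics} can exist beyond any ultimate bound,'' and your argument simply makes this precise by combining Lemma~\ref{lem:lne_subset_eqpt}, the observation that an equilibrium is a constant solution and hence satisfies $|z_i^*|\leq\eta_i$, and the necessary condition $(z_i^*)^2\geq\tau_i$ from Theorem~\ref{thm:local_nash_eq}. No gaps.
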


Finally, to end this subsection, we provide a sufficient condition 
under which the different equilibria sets ($\eqpt$, $\ne$ and $\lne$) 
are equal. Additionally, we also give a sufficient condition under 
which the opinion formation game has a unique Nash equilibrium. We 
state this 
in the following result.

\begin{theorem}\label{thm:ne_nc}
	\thmtitle{Equality of equilibria sets and uniqueness of Nash equilibrium}
	Consider the dynamics in \eqref{eq:dynamics_split} and the set of 
	equilibrium points $\eqpt$ in \eqref{eq:eqpt}. Suppose for each agent $i \in \agt$, $\tau_i \leq 0$. Then, $\ne= 
	\lne = \eqpt$. Moreover, if Assumption~\ref{asmp:weak_antagonist_relations} holds then it is a singleton set.
\end{theorem}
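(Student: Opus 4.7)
The plan is to establish the chain $\eqpt \subseteq \ne \subseteq \lne \subseteq \eqpt$ under the hypothesis $\tau_i \leq 0$, $\forall i \in \agt$. The inclusion $\ne \subseteq \lne$ is immediate from~\eqref{eq:ne} and Definition~\ref{def:lne}, since an inequality that holds for every $\zi \in \real$ certainly holds in any neighborhood of $\zi^*$. The inclusion $\lne \subseteq \eqpt$ is Lemma~\ref{lem:lne_subset_eqpt}. So only $\eqpt \subseteq \ne$ needs work.

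The key observation is that, under $\tau_i \leq 0$, for every fixed $\znoti$ the single-variable map $\zi \mapsto U_i(\zi, \znoti, \di)$ is globally concave on $\real$. Concretely, repeating the computation carried out at $\zi^*$ in the proof of Theorem~\ref{thm:local_nash_eq}, but now at an arbitrary $\zi$, yields
\[
\frac{\partial^2 U_i}{\partial \zi^2}(\zi, \znoti) = \frac{3}{\ri}\bigl[\tau_i - \zi^2\bigr],
\]
which is independent of $\znoti$ and non-positive for all $\zi \in \real$ whenever $\tau_i \leq 0$. For any $\boldz^* \in \eqpt$, the equilibrium condition $f_i(\boldz^*, \di) = 0$ is exactly the first-order condition $\frac{\partial U_i}{\partial \zi}(\boldz^*, \di) = 0$, so $\zi^*$ is a stationary point of the concave function $U_i(\cdot, \znoti^*, \di)$ and hence its global maximizer on $\real$. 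This is precisely the Nash condition~\eqref{eq:ne}, so $\boldz^* \in \ne$, which finishes the first statement.

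The second statement is an immediate consequence: under Assumption~\ref{asmp:weak_antagonist_relations}, Theorem~\ref{thm:exis_uniq_of_eqm} guarantees that $\eqpt$ is a singleton, and since $\ne = \lne = \eqpt$, this common set is a singleton as well. I do not anticipate a substantive obstacle; the only point requiring care is noting that the second-derivative expression that appears in the proof of Theorem~\ref{thm:local_nash_eq} only at the point $\zi^*$ actually promotes to a global concavity bound once $\tau_i \leq 0$, upgrading the second-order necessary condition there into a global optimality conclusion here.
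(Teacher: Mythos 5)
Your proposal is correct and follows essentially the same route as the paper: both hinge on the observation that $\tau_i \leq 0$ makes $\zi \mapsto U_i(\zi,\znoti,\di)$ concave for every fixed $\znoti$ (the paper notes the second derivative $\tfrac{3}{\ri}[\tau_i - \zi^2]$ is independent of $\znoti$ and non-positive), so the first-order condition defining $\eqpt$ upgrades to global optimality. The only cosmetic difference is that you prove $\eqpt \subseteq \ne$ in one step, whereas the paper splits it into $\eqpt \subseteq \lne$ and $\lne \subseteq \ne$; the closing argument for the singleton claim via Theorem~\ref{thm:exis_uniq_of_eqm} is identical.
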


\begin{proof} Suppose that $\tau_i \leq 0$, $\forall i \in \agt$.
	First, it is obvious that $\ne \subseteq \lne$ and 
	Lemma~\ref{lem:lne_subset_eqpt} implies that $\lne \subseteq \eqpt$. Next, we show that $\eqpt \subseteq \lne$. Consider an equilibrium $\boldz^{*} \in \eqpt$.
	It can be easily verified that for each $i \in \agt$, $U_{i}(\cdot,\znoti^{*})$ is strictly concave in $\zi$. Since $\boldz^{*} \in \eqpt$, the first order necessary condition for a point to be a local maximizer of $U_{i}(\cdot,\znoti^{*})$ is satisfied for each $i$. Hence, $\boldz^{*} \in \lne$. Finally, we show that $\lne \subseteq \ne$.
  Since $\forall i \in \agt$, $U_{i}(\cdot,\znoti^{*})$ is a 
  strictly concave function for each $\znoti^*$, the implication follows directly from the  definitions of $\ne$ 
  and $\lne$. This completes the proof of the first part of the claim. Now suppose that Assumption~\ref{asmp:weak_antagonist_relations} holds. Then again $\tau_i<0$, $\forall i \in \agt$. Uniqueness of Nash equilibrium now follows from Theorem~\ref{thm:exis_uniq_of_eqm} and first part of the claim.  
\end{proof}
\begin{remark}
  We can interpret
  $\wi$ as the weight of influence of a stubborn virtual agent on
  agent $i$. Using this interpretation, let us define the
  \emph{weighted in-degree} for each agent $j \in \agt$ in social
   network $\mathsf{G}$ as $d_{j}^{in}\ldef 
   \left[w_j + \agtsumknotj a_{jk}\right]$.
   Then 
   for each $j \in \agt$, $\tau_{j} = -\rj d_{j}^{in}/3$ and $\tau_j < 
   0(= 0)$ if and only if $d_{j}^{in} > 0 (= 0)$. The condition 
   $d_{j}^{in}>0$, $\forall j \in \agt$ can be interpreted as follows: 
   For each agent $i \in \agt$, the aggregate influence weight of its 
   friends 
   (including the virtual fully stubborn agent since $\wi > 0$) is 
   greater than the aggregate influence weight of its enemies. If this 
   holds true then Theorem~\ref{thm:ne_nc} states that each equilibrium 
   point of the dynamics is also a Nash equilibrium of the game. Next, 
   if Assumption~\ref{asmp:weak_antagonist_relations} holds then for 
   each $j \in \agt$, $d_j^{in}>0$ and hence $\tau_j<0$. In other 
   words, the opinion formation game has a unique 
	Nash equilibrium if the \emph{willingness} of each agent to be close to its internal preference (or stubbornness) is more than 
	twice the aggregate influence weight of its enemies. Moreover, thanks 
	to Theorem~\ref{thm:exis_uniq_of_eqm}, we can say that the 
	opinions under \eqref{eq:dynamics} always converge to this unique 
	Nash equilibrium, starting from any initial condition. \remend
\end{remark}

\subsection{Price of Anarchy}
Next, we analyze the \emph{price of anarchy} of the game underlying the 
opinion dynamics~\eqref{eq:dynamics}. In the entirety of this 
subsection, we will use cost minimization perspective rather than 
utility maximization. We let the cost incurred by agent $i \in \agt$ 
for opinion profile $\boldz$ be $\costi(\boldz,\boldp) \ldef -U_{i}(\boldz,\boldp)$, 
where $U_{i}(\boldz,\boldp)$ is given by~\eqref{eq:utility}. In order 
to ensure non-negativity of the prices of anarchy defined below, we 
need $\costi(\boldz,\boldp)\geq 0\:;\:\forall \boldz \in \realn$. 
Hence, in this subsection we will assume that all inter-agent relations 
are non-antagonistic, i.e., Assumption~\ref{asmp:no_antagonist} holds. 
We consider price of anarchy for two most commonly used objective 
functions in the game theory literature, namely, the \emph{egalitarian} 
and the \emph{utilitarian} costs
\begin{equation*}
  \Ce(\boldz,\boldp) := \max_{i \in \agt} \costi(\boldz,\boldp), \quad 
  \Cu(\boldz,\boldp) := \sum_{i \in \agt} \costi(\boldz,\boldp),
  \label{eq:utilitarian_cost}
\end{equation*}
to measure the inefficiency of Nash equilibria of the game underlying
the opinion dynamics~\eqref{eq:dynamics}. Hereafter, we will again
exclude the preference arguments in $\costi(\cdot,\cdot)$,
$\Ce(\cdot,\cdot)$ and $\Cu(\cdot,\cdot)$ for brevity. In the
following definitions of price of anarchy, we will assume
$\boldp \neq \boldzero$ to ensure the positivity of $\Ce(\boldz)$ and
$\Cu(\boldz)$ at any $\boldz \in \realn$. This ensures that prices of
anarchy~\eqref{eq:poae} are well defined. We discuss the case when
$\boldp =\boldzero$ in a later remark.
\begin{define}\thmtitle{Price of anarchy}  
	\label{def:poa}
	Consider the opinion formation game $\mathcal{G}$ corresponding to 
	the opinion dynamics~\eqref{eq:dynamics}, and $\ne$, the set of its 
	Nash equilibria. Suppose Assumption~\ref{asmp:no_antagonist} holds 
	and $\boldp \neq \boldzero$. The egalitarian and the utilitarian 
	prices of anarchy, $\poae$ and $\poau$, respectively are defined as
	\begin{equation} \label{eq:poae}
		\poae:= \frac{\max\limits_{\boldz \in \ne }\Ce(\boldz)}{\min\limits_{\boldz \in \realn }\Ce(\boldz)} \geq 1; \,\,\,\, \poau:= \frac{\max\limits_{\boldz \in \ne }\Cu(\boldz)}{\min\limits_{\boldz \in \realn }\Cu(\boldz)} \geq 1.
	\end{equation}
\bulletend
\end{define}
In Definition~\ref{def:poa}, $\poae$ compares the cost incurred by the worst 
performing agent at the worst Nash equilibrium to the minimum 
possible cost $\Ce$. Similarly, $\poau$ compares the total cost 
incurred by all agents at the worst Nash equilibrium to the minimum 
possible cost $\Cu$. The closer the value of the prices of anarchy~\eqref{eq:poae} is to unity, the better the quality of the Nash equilibrium. We now define the \emph{satisfaction ratio} at 
opinion profile $\boldz$ for every agent $i \in \agt$ in order to 
compare the cost incurred by agent $i$ at $\boldz$ and the minimum 
possible cost it could incur. Using these ratios, we can give some 
upper bounds on the prices of anarchy \eqref{eq:poae}. In order to keep the satisfaction ratios well defined, we need positivity of $\costi(\boldz)$ everywhere. Hence we assume the following,
\begin{enumerate}[resume,label=\textbf{(A\arabic*)},wide=\parindent] 
	\item 	\thmtitle{Non-zero preferences}
	$\forall i \in \agt$, $\di \neq 0$.
	\label{asmp:all_agents_have_non_zero_preferences}
	\bulletend
\end{enumerate}
\begin{define}
  \thmtitle{Satisfaction ratio} \label{def:satisfactionratio}
  Consider the cost function $\costi(\boldz)$ of each agent $i \in 
  \agt$. Suppose Assumptions~\ref{asmp:no_antagonist} 
  and~\ref{asmp:all_agents_have_non_zero_preferences} hold. The 
  satisfaction ratio $\satratioi$ for any agent $i \in \agt$ 
  at any $\boldz \in \realn$ is defined as,
  \begin{equation}
    \satratioi(\boldz) \ldef \frac{\costi(\boldz)}{\min_{\boldz \in 
   \realn} \costi(\boldz)} \geq 1.
   \label{eq:satratio}
  \end{equation}
\end{define}
\vspace{3pt}
The next result shows that if there are no antagonistic 
relations among the agents, then for 
each agent $ i \in \agt$, $\costi(\cdot)$ is convex. Note that this 
result says $\costi(\boldz)$ is convex in $\boldz$ and not just with 
respect to $z_i$. Proof of the following result is in the appendix.

\begin{lemma}
	\thmtitle{Convexity of cost function} \label{lem:convexcost}
Suppose that Assumption~\ref{asmp:no_antagonist} holds. Then for each $i \in \agt$,
$\costi(\cdot)$ is convex with $m_{i}\boldone$ as one of its minimizer. 
\remend
\end{lemma}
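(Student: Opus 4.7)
The plan is a direct verification: write out $\costi(\boldz) = -U_i(\boldz,\boldp)$ term by term, observe that under Assumption~\ref{asmp:no_antagonist} every coefficient becomes non-negative, and conclude convexity from the sum-of-convex-functions principle; then show $m_i \boldone$ is a global minimizer by computing the gradient and invoking convexity.

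First I would write
\begin{equation*}
  \costi(\boldz) \;=\; \frac{\wi \ri}{2B}(\zi-\di)^{2} \;+\; \frac{1}{2}\sum_{k\in\agt}\frac{\aik \rk}{B}(\zk-\zi)^{2} \;+\; \frac{1}{4\ri}\zi^{4}.
\end{equation*}
Under Assumption~\ref{asmp:no_antagonist}, $\aik \geq 0$ for every $k \in \agt$ (with $\aii = 0$ adopted w.l.o.g.), and $\wi, \ri > 0$ by Assumption~\ref{asmp:param_sign}. Each $(\zi-\di)^2$ and $(\zk-\zi)^2$ is the square of an affine function of $\boldz$ and is therefore convex; multiplying by a non-negative constant preserves convexity. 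The quartic $\zi^4/(4\ri)$ is convex since it is the composition of the convex non-decreasing function $t \mapsto t^4/(4\ri)$ (for $t \geq 0$, extended symmetrically) with $|\zi|$, or more simply because its second derivative $3\zi^2/\ri$ is non-negative. Convexity of $\costi$ then follows as a non-negative sum of convex functions.

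For the minimizer claim, I would verify the first-order optimality condition at $\boldz = m_i \boldone$. Comparing with the dynamics~\eqref{eq:dynamics}, we have $\nabla_{\boldz}\costi(\boldz) = -\nabla_{\boldz} U_i(\boldz,\boldp)$, whose $i$-th component is precisely $-f_i(\boldz,\di) = -\Sf_i(\zi) - \Cf_i(\boldz)$ and whose $k$-th component (for $k \neq i$) equals $(\aik \rk / B)(\zk-\zi)$. Evaluating at $\boldz = m_i \boldone$: the off-diagonal components vanish since $\zk - \zi = 0$; the crowd term $\Cf_i(m_i \boldone)$ also vanishes for the same reason; and $\Sf_i(m_i) = 0$ by the very definition of $m_i$. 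Hence $\nabla\costi(m_i\boldone) = \boldzero$, and convexity of $\costi$ upgrades this critical point to a global minimizer, completing the proof.

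I do not anticipate a genuine obstacle here: the work reduces to recognizing that Assumption~\ref{asmp:no_antagonist} is exactly the sign condition needed for the crowd-penalty terms to be individually convex (with antagonism, some $\aik < 0$ and that term becomes concave), and that $m_i \boldone$ simultaneously zeros out $\Sf_i$ and every pairwise difference $\zk-\zi$, so it is automatically stationary for $\costi$.
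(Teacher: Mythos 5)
Your proof is correct. It reaches the same two conclusions by the same overall skeleton (establish convexity, then verify the first-order condition at $m_i\boldone$ and invoke convexity to upgrade it to a global minimizer), but the convexity step is argued differently: the paper computes the Hessian of $\costi$ and shows it is positive semidefinite via the Gerschgorin disc theorem, whereas you decompose $\costi$ directly into a non-negative combination of squares of affine functions plus the quartic $\zi^4/(4\ri)$, and appeal to closure of convexity under such sums. Your route is more elementary and makes transparent exactly where Assumption~\ref{asmp:no_antagonist} enters --- it is precisely the sign condition $\aik \ge 0$ that keeps each crowd term $\tfrac{\aik\rk}{2B}(\zk-\zi)^2$ convex rather than concave --- while the paper's Gerschgorin argument is a one-line verification once the Hessian is written down (the diagonal entries dominate the off-diagonal row sums because the quadratic coupling terms contribute symmetrically). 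The minimizer argument is identical in both: at $m_i\boldone$ every pairwise difference $\zk-\zi$ vanishes and $\Sf_i(m_i)=0$ by definition of $m_i$, so the gradient is zero.
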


Now, from 
Theorem~\ref{thm:exis_uniq_of_eqm} and Theorem~\ref{thm:ne_nc}, we 
know that in the absence of antagonistic relations, the game has a 
unique Nash equilibrium $\boldz^{*}\in \ne$. In the next result,  whose 
proof is in the Appendix, we use the satisfaction ratios to give an 
upper bound on $\poae$ and $\poau$.
\begin{theorem}
	\thmtitle{Bounds on prices of anarchy}\label{thm:poa_bound}
	Consider the dynamics \eqref{eq:dynamics}. Suppose 
	Assumptions~\ref{asmp:no_antagonist} 
	and~\ref{asmp:all_agents_have_non_zero_preferences} hold. Let us 
	denote the unique Nash equilibrium of the game underlying the 
  opinion dynamics~\eqref{eq:dynamics} by $\boldz^{*} \in \ne$. Then,
	\begin{equation} \label{ineq:poa_bound}
		1 \leq \poae \leq \max_{i \in \agt} \: \satratioi(\boldz^{*})\:\:, 
		\ 1 \leq \poau \leq \sum_{i \in \agt} \: \satratioi(\boldz^{*})\:\:,
	\end{equation}
 with $\poae$ and $\poau$ as defined in Definition~\ref{def:poa}.
 \bulletend
\end{theorem}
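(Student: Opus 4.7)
The plan is first to establish uniqueness of the Nash equilibrium so that the max over $\ne$ in Definition~\ref{def:poa} collapses to evaluation at a single point. Under Assumption~\ref{asmp:no_antagonist}, $\Nie = \varnothing$ for every $i \in \agt$, so
\begin{equation*}
\tau_i = -\frac{\ri}{3}\Bigl[\friendsumi \frac{\aik\rk}{B} + \frac{\wi\ri}{B}\Bigr] < 0,
\end{equation*}
and Assumption~\ref{asmp:weak_antagonist_relations} is trivially met. Then Theorem~\ref{thm:exis_uniq_of_eqm} combined with Theorem~\ref{thm:ne_nc} gives $\ne = \lne = \eqpt = \{\boldz^*\}$. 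Under Assumption~\ref{asmp:all_agents_have_non_zero_preferences}, Lemma~\ref{lem:convexcost} says each $\costi$ is convex with $m_i\one$ as a minimizer, and at this minimizer $\costi(m_i\one) > 0$ since $\di \neq 0$ forces $m_i \neq \di$, making the first term of $-U_i$ strictly positive. The lower bounds $\poae, \poau \geq 1$ are then immediate from Definition~\ref{def:poa}, since $\boldz^*$ itself lies in the feasible set of the denominator minimization.

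For the egalitarian upper bound, I pick any $i^* \in \argmax_{i \in \agt} \costi(\boldz^*)$, so that $\Ce(\boldz^*) = \chi_{i^*}(\boldz^*)$. The pointwise inequality $\Ce(\boldz) = \max_{i \in \agt} \costi(\boldz) \geq \chi_{i^*}(\boldz)$ for every $\boldz \in \realn$ yields $\min_{\boldz \in \realn} \Ce(\boldz) \geq \min_{\boldz \in \realn} \chi_{i^*}(\boldz) > 0$; dividing through gives
\begin{equation*}
\poae \;\leq\; \frac{\chi_{i^*}(\boldz^*)}{\min_{\boldz \in \realn} \chi_{i^*}(\boldz)} \;=\; \mathsf{SR}_{i^*}(\boldz^*) \;\leq\; \max_{i \in \agt} \satratioi(\boldz^*).
\end{equation*}

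For the utilitarian upper bound, the workhorse is the elementary pointwise inequality $\sum_{i \in \agt} \costi(\boldz) \geq \sum_{i \in \agt} \min_{\boldz' \in \realn} \costi(\boldz')$, which upon minimizing the left-hand side over $\boldz$ yields $\min_{\boldz \in \realn} \Cu(\boldz) \geq \sum_{i \in \agt} \min_{\boldz' \in \realn} \costi(\boldz') > 0$. Hence
\begin{equation*}
\poau \;\leq\; \frac{\sum_{i \in \agt} \costi(\boldz^*)}{\sum_{i \in \agt} \min_{\boldz' \in \realn} \costi(\boldz')},
\end{equation*}
and the right-hand side is a convex combination of the ratios $\satratioi(\boldz^*)$ with strictly positive weights $\min_{\boldz'}\costi(\boldz')/\sum_{j\in\agt}\min_{\boldz'}\chi_j(\boldz')$. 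It is therefore bounded above by $\max_{i \in \agt} \satratioi(\boldz^*)$, and a fortiori by $\sum_{i \in \agt} \satratioi(\boldz^*)$ since each $\satratioi \geq 1$. The only real obstacle is the bookkeeping for strict positivity of the per-agent minimum costs, which is why Assumption~\ref{asmp:all_agents_have_non_zero_preferences} is imposed; once that is in hand the rest of the argument is the standard \emph{pointwise bound plus weighted average} technique for prices of anarchy.
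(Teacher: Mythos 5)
Your proof is correct and follows essentially the same route as the paper: lower-bound the optimal egalitarian/utilitarian cost by the per-agent minima (using the convexity and minimizers $m_i\boldone$ from Lemma~\ref{lem:convexcost}) and then compare ratios agent by agent. One small bonus: your convex-combination step for $\poau$ actually yields the sharper bound $\poau \leq \max_{i\in\agt}\satratioi(\boldz^{*})$, which the paper gives away by replacing the denominator $\sum_{j\in\agt}\costj(m_{j}\boldone)$ with the single term $\costi(m_{i}\boldone)$ before summing.
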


A consequence of this result is that, for the special case where the 
unique Nash equilibrium is a consensus equilibrium, $\poae =\poau=1$, 
i.e., consensus equilibrium (if it exists) is socially optimal. We 
formally state and prove this next.

\begin{corollary}
	\thmtitle{$\poa$ is unity for a non-neutral consensus equilibrium}
	\label{cor:unity_poa}
	Consider the dynamics \eqref{eq:dynamics}. Suppose 
	Assumptions~\ref{asmp:no_antagonist} 
	and~\ref{asmp:all_agents_have_non_zero_preferences} hold. Suppose the 
	unique $\boldz^* \in \ne = \eqpt$ is a non-neutral consensus 
	equilibrium, i.e., $\boldz^* = m\boldone$ for some $m \neq 0$. Then, 
	$ \poae=\poau=1$.
\end{corollary}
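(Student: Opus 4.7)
The plan is to leverage the characterization of consensus equilibria in Theorem~\ref{thm:consensus_iff} together with the convexity result in Lemma~\ref{lem:convexcost} to show that the unique Nash equilibrium $\boldz^* = m\boldone$ simultaneously minimizes every agent's cost function. Once this is established, the bound from Theorem~\ref{thm:poa_bound} immediately yields $\poae = 1$, and a direct argument gives $\poau = 1$.

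First, since $\boldz^* = m\boldone \in \eqpt$ is a consensus equilibrium with $m \neq 0$, Theorem~\ref{thm:consensus_iff} implies $m_i = m$ for every $i \in \agt$. By Lemma~\ref{lem:convexcost}, under Assumption~\ref{asmp:no_antagonist} each $\costi(\cdot)$ is convex on $\realn$ with $m_i \boldone$ among its minimizers. Combining these two facts, $\boldz^* = m\boldone = m_i \boldone$ is a minimizer of $\costi$ for every $i \in \agt$, and therefore $\costi(\boldz^*) = \min_{\boldz \in \realn} \costi(\boldz)$, which yields $\satratioi(\boldz^*) = 1$ for every $i \in \agt$.

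For the egalitarian price of anarchy, substituting into the bound from Theorem~\ref{thm:poa_bound} gives
\begin{equation*}
1 \leq \poae \leq \max_{i \in \agt} \satratioi(\boldz^*) = 1,
\end{equation*}
and hence $\poae = 1$. For the utilitarian price of anarchy, the bound in Theorem~\ref{thm:poa_bound} only yields $\poau \leq n$, so I would argue directly: since $\boldz^*$ minimizes each $\costi$ individually, for every $\boldz \in \realn$,
\begin{equation*}
\Cu(\boldz) = \sum_{i \in \agt} \costi(\boldz) \geq \sum_{i \in \agt} \costi(\boldz^*) = \Cu(\boldz^*),
\end{equation*}
so $\boldz^*$ minimizes $\Cu$ over $\realn$. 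Combined with $\poau \geq 1$, this gives $\poau = 1$.

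The argument is essentially a composition of previously established facts; the only non-routine step is recognizing that the specific minimizer $m_i \boldone$ of $\costi$ identified in Lemma~\ref{lem:convexcost} collapses to the common consensus value $m\boldone$ precisely when a non-neutral consensus equilibrium exists, which is what makes the satisfaction ratios trivially unity. I do not expect any real obstacle — the main care needed is to handle $\poau$ via a direct monotonicity/minimization argument rather than relying solely on the (looser) sum bound of Theorem~\ref{thm:poa_bound}.
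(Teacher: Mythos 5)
Your proposal is correct and follows essentially the same route as the paper: both establish $\satratioi(\boldz^*)=1$ via Theorem~\ref{thm:consensus_iff} and Lemma~\ref{lem:convexcost}, get $\poae=1$ from \eqref{ineq:poa_bound}, and handle $\poau$ by noting that the common minimizer $m\boldone$ makes $\min_{\boldz}\Cu(\boldz)=\Cu(\boldz^*)$ rather than relying on the loose sum bound. Your direct monotonicity argument for $\poau$ is exactly the content of the intermediate inequality the paper recycles from the proof of Theorem~\ref{thm:poa_bound}, so there is no substantive difference.
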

\begin{proof}
	 From Theorem~\ref{thm:consensus_iff}, Lemma~\ref{lem:convexcost} and 
	 Definition~\ref{def:satisfactionratio}, it follows that, 
	 $\satratioi(m\boldone)=1,\: \forall i \in \agt$. 
	 From~\eqref{ineq:poa_bound}, we thus have $\poae =1$. From the proof 
	 of Theorem~\ref{thm:poa_bound} for the bound on $\poau$, we have
	 \begin{equation*}
	 	1\leq \poau \leq \frac{\sum_{i \in \agt}\costi(m\boldone)}{\sum_{i \in \agt}\costi(m\boldone)}=1 \,,
	 \end{equation*}
since $\boldz^{*}=m\boldone$ and $m_{i}=m, \:\forall i \in \agt$.
\end{proof}

 Note that neutral and non-neutral consensus equilibrium exists only if 
 $\boldp=\boldzero$ and 
 Assumption~\ref{asmp:all_agents_have_non_zero_preferences} holds, 
 respectively. Consensus equilibrium cannot exist if $ \exists i,j \in 
 \agt$ with $p_{i}=0$ and $p_{j}\neq 0$. Corollary~\ref{cor:unity_poa} 
 states that non-neutral consensus is socially optimal. In the next 
 remark, we argue that neutral consensus is also socially optimal.
\begin{remark}
  \thmtitle{Neutral consensus is socially optimal} Let
  Assumption~\ref{asmp:no_antagonist} hold and $\boldp = \boldzero$.
  Theorems~\ref{thm:consensus_iff},~\ref{thm:exis_uniq_of_eqm}
  and~\ref{thm:ne_nc} imply that the unique equilibrium point
  $\boldz^{*}=\boldzero \in \eqpt = \ne$. Now, from
  Lemma~\ref{lem:convexcost} we know that $\boldzero$ is a minimizer
  of $\costi(\boldz)$, $\Ce(\boldz)$ and $\Cu(\boldz)$. Thus, prices
  of anarchy~\eqref{eq:poae} and satisfaction
  ratios~\eqref{eq:satratio} are not defined in this case. But since
  $\Ce(\boldz^{*})= \min_{\boldz \in \realn} \Ce(\boldz) = 0 =
  \Cu(\boldz^{*}) = \min_{\boldz \in \realn}\Cu(\boldz)$, neutral
  consensus equilibrium is also socially optimal. \remend
\end{remark}

\section{Oscillatory Behavior of Opinions} \label{sec:oscillatory_behaviors}
In~\cite{2023_PW_NM_PT}, we studied a special case of the 
dynamics proposed in the current paper. We showed through numerical 
simulations that in certain scenarios, the opinions exhibit 
oscillatory behavior. In this section, we analyze such behavior in the 
case of a pair of agents $\agt = \{1,2\}$. For a general social network 
with $n$ agents, the analysis is significantly more complicated due to 
the higher dimension of the state space but also due to the complexity 
added by the social network. Thus, the analysis in the general case of 
a social network with $n$ agents would have to be a separate research 
work in itself and is out of the scope of this paper. 
Even though our
analysis is restricted to two agents, it is still relevant since it can help us understand opinion
behaviors or decision making in important systems such as two-party
politics, duopoly economic markets etc.

The opinion dynamics~\eqref{eq:dynamics} for two agents is,
  \begin{equation}
    \dot{z}_{1} = \Sf_{1}(z_{1})+ a_{12} \big[z_{2}-z_{1} \big], \ 
    \dot{z}_{2} = \Sf_{2}(z_{2})-a_{21} \big[z_{2}-z_{1} \big].
    \label{eq:two_agent_dynamics}
  \end{equation}
 For the \emph{two-agent
  dynamics}~\eqref{eq:two_agent_dynamics}, it can be easily verified
that if the opinions $z_{1}(t)$ and $z_{2}(t)$ exhibit oscillatory
behavior then they will have the same fundamental period of
oscillations. We state this in the following result (proof in
Appendix).
\begin{lemma}{(\textbf{\emph{Equal period of oscillations}})} \label{lem:equal_periods}
	Consider the two-agent dynamics \eqref{eq:two_agent_dynamics}. 
	Suppose the opinions $z_{1}(t)$ and $z_{2}(t)$ exhibit oscillations 
	with fundamental periods $T_{1}>0$ and $T_{2}>0$, respectively. Then, 
	$T_{1}=T_{2}$.
	\bulletend
\end{lemma}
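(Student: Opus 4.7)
The plan is to exploit the algebraic coupling in \eqref{eq:two_agent_dynamics} to express each agent's opinion in terms of the other's opinion and its derivative, and then apply a mutual-divisibility argument to the fundamental periods.

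First I would rule out the degenerate cases. If $c_1=0$ then the first equation reduces to the decoupled scalar ODE $\dot{z}_1 = \Sf_1(z_1)$, and from the sign structure of $\Sf_1$ recorded in \eqref{eq:SC_sign} it has a globally asymptotically stable equilibrium at $m_1$. Hence $z_1(t)$ could not be nontrivially periodic, contradicting the hypothesis that $T_1>0$ is a fundamental period. Therefore $c_1\neq 0$, and the same argument applied to the second equation gives $c_2\neq 0$.

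Next, assuming $c_1, c_2 \neq 0$, I would algebraically solve the first equation in \eqref{eq:two_agent_dynamics} for $z_2$:
\begin{equation*}
z_2(t) = z_1(t) + \frac{B}{c_1 r_2}\bigl[\dot{z}_1(t) - \Sf_1(z_1(t))\bigr].
\end{equation*}
Because $z_1$ is $T_1$-periodic, so are $\dot{z}_1$ and $\Sf_1(z_1(\cdot))$ (by continuous differentiability and continuity of $\Sf_1$, respectively), hence $z_2(\cdot)$ is $T_1$-periodic. Since $T_2$ is by assumption the \emph{fundamental} period of $z_2$, this forces $T_2$ to divide $T_1$, i.e., $T_1 = k T_2$ for some $k \in \mathbb{N}$. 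Solving the second equation of \eqref{eq:two_agent_dynamics} for $z_1$ in an entirely symmetric way gives $z_1$ as an expression in $z_2(t)$, $\dot{z}_2(t)$ and $\Sf_2(z_2(t))$, and the same reasoning yields that $T_1$ divides $T_2$. Together these imply $T_1=T_2$.

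The argument is essentially algebraic and the only real delicate point is justifying that the derived expression for $z_2$ (or $z_1$) really has no smaller period than $T_1$ could force; this is handled cleanly by the definition of fundamental period, which states that $T_2$ is the \emph{smallest} positive period of $z_2$, so any other period must be an integer multiple of it. There is no genuine analytical obstacle — the main thing to be careful about is the coupling-free case, which is ruled out by the convergence property of $\Sf_i$ established earlier in Section~\ref{sec:modeling_and_problem_setup}.
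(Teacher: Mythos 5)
Your proof is correct and takes essentially the same route as the paper's: both rearrange each equation in \eqref{eq:two_agent_dynamics} so that one agent's opinion is expressed through the other agent's trajectory, its derivative and $\Sf_i$, conclude mutual divisibility of the fundamental periods ($T_1 = mT_2$ and $T_2 = nT_1$ with $m,n\in\nat$), and deduce $T_1 = T_2$. Your explicit dismissal of the degenerate case $c_1c_2=0$ (needed to justify dividing by the coupling coefficients) is a small point the paper leaves implicit here and only addresses in Lemma~\ref{lem:necessary_condition_for_periodic_orbits}, but the core argument is identical.
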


Note that periodic orbits can exist for~\eqref{eq:two_agent_dynamics} 
only if at least one of the agents has an antagonist influence on the other as otherwise
Theorem~\ref{thm:exis_uniq_of_eqm} guarantees convergence of opinions 
to the unique equilibrium from any arbitrary initial condition. In the 
next result (proof in Appendix), we give a necessary condition for the 
existence of periodic orbits for~\eqref{eq:two_agent_dynamics}.

\begin{lemma}{(\textbf{\emph{Necessary condition for existence of periodic solutions}})} \label{lem:necessary_condition_for_periodic_orbits}
Consider the two-agent dynamics \eqref{eq:two_agent_dynamics}. 
Opinions 
$z_{1}(t)$ and $z_{2}(t)$ exhibit periodic behavior only if all the following hold: (i) $a_{12} < 0$ or $a_{21} < 0$, (ii) $a_{12}a_{21}\neq 0$, and (iii) $(a_{21}+ w_{1} + w_{2} + a_{12})<0$.
\bulletend
\end{lemma}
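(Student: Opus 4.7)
The plan is to prove the contrapositive of each of the three conditions separately: assuming each one fails in turn, I will show that the two-agent system~\eqref{eq:two_agent_dynamics} cannot admit a non-trivial periodic solution. Conditions (i) and (ii) will follow from results already established earlier in the paper, while (iii) will follow from Bendixson's negative criterion applied on $\real^2$.

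For (i), if $c_1 \geq 0$ and $c_2 \geq 0$, then $\mathcal{N}_1^e = \mathcal{N}_2^e = \varnothing$, so Assumption~\ref{asmp:no_antagonist} (a special case of Assumption~\ref{asmp:weak_antagonist_relations}) holds, and Theorem~\ref{thm:exis_uniq_of_eqm} yields a unique globally exponentially stable equilibrium of~\eqref{eq:two_agent_dynamics}, ruling out any periodic orbit. For (ii), if $c_1 = 0$ (the case $c_2 = 0$ is symmetric), then agent~$1$ is socially closed, so its equation decouples as $\dot{z}_1 = S_1(z_1)$, and Lemma~\ref{lem:eqm_isolated_agents} gives $z_1(t) \to m_1$; hence $z_1$ cannot admit any fundamental period $T_1 > 0$, and by Lemma~\ref{lem:equal_periods} neither can $z_2$.

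For (iii), I would compute the Jacobian of the right-hand side of~\eqref{eq:two_agent_dynamics} and take its trace to obtain the divergence of the vector field,
\begin{equation*}
\nabla \cdot \mathbf{f}(\boldz) \;=\; -\frac{c_2 r_1 + w_1 r_1 + w_2 r_2 + c_1 r_2}{B} \;-\; \frac{3 z_1^2}{r_1} \;-\; \frac{3 z_2^2}{r_2}.
\end{equation*}
If $(c_2 r_1 + w_1 r_1 + w_2 r_2 + c_1 r_2) \geq 0$, then $\nabla \cdot \mathbf{f}(\boldz) \leq 0$ everywhere on $\real^2$, with strict inequality away from the origin; in particular the divergence has constant sign on the simply connected set $\real^2$ and is not identically zero. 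Bendixson's negative criterion then precludes the existence of any closed orbit of~\eqref{eq:two_agent_dynamics}, which contradicts periodicity.

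The main obstacle is really bookkeeping rather than ideas: I need to be careful with the sign of the coupling term in $\dot{z}_2$ in~\eqref{eq:two_agent_dynamics} when computing the Jacobian, and to verify that the hypotheses of Bendixson's criterion are genuinely met (constant-sign divergence on a simply connected domain and not identically zero). Once these are in place, each of (i), (ii), (iii) reduces to a short argument invoking either an earlier result of this paper or a standard planar dynamical-systems fact, with no additional fine analysis required.
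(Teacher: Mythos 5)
Your proposal is correct and follows essentially the same route as the paper: condition (i) via Theorem~\ref{thm:exis_uniq_of_eqm}, condition (ii) via Lemma~\ref{lem:eqm_isolated_agents}, and condition (iii) via the same divergence computation and Bendixson's criterion. The only differences are minor: the paper treats the boundary case $(c_{2}r_{1}+w_{1}r_{1}+w_{2}r_{2}+c_{1}r_{2})=0$ separately with a Green's-theorem argument, whereas you absorb it into a single application of the ``not identically zero and of constant sign'' form of Bendixson's criterion (valid and slightly cleaner), and your appeal to Lemma~\ref{lem:equal_periods} in part (ii) is unnecessary --- its hypotheses require both opinions to be periodic, which is exactly what fails --- since non-periodicity of $z_1$ alone already negates the conclusion.
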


\begin{remark} \label{rem:ellipse}
\thmtitle{Geometric and other interpretations of the conditions for the 
existence of periodic opinions}
	 The condition given in Lemma~\ref{lem:necessary_condition_for_periodic_orbits} ensures existence of the following ellipse in the phase plane,
	 	\begin{equation} \label{eq:ellipse}
	 		\frac{z_{1}^{2}}{r_{1}} + \frac{z_{2}^{2}}{r_{2}} = -\frac{(w_{1} + w_{2} + a_{12}+a_{21})}{3} \rdef \upsilon>0.
	 	\end{equation}
  From the proof of 
  Lemma~\ref{lem:necessary_condition_for_periodic_orbits}, we see that 
  the $\mathsf{div}(\mathbf{f}(\boldz))$ is equal to zero on the 
  ellipse and positive (negative) inside (outside) the ellipse. Thus, 
  any periodic orbit of \eqref{eq:two_agent_dynamics} in the phase 
  plane should necessarily intersect the above ellipse.
  
The oscillatory behavior of opinions appears only when at least one of the agents has an antagonist influence on the other. This is similar to the so-called 
  \emph{boomerang effect}~\cite{1967_RA_JM_boomerang_effect}, where at 
  least one of the agents has an antagonist influence on another agent. 
  As a result, the agent on whom there is an antagonist influence 
  shifts its opinion away from the other agent. In such situations, the 
  opinions of the agents could converge to a disagreement equilibrium 
  or could possibly keep on oscillating forever never converging to an 
  equilibrium. 
  \remend
\end{remark}

Now that we have dealt with necessary conditions for a periodic solution of~\eqref{eq:two_agent_dynamics} to exist, we conclude this section by giving sufficient conditions for the agents to exhibit periodic opinion profiles. More specifically, we give sufficient conditions on the conformity weights $a_{12}$, $a_{21}$ for a Hopf bifurcation to exist (proof in Appendix).

\begin{theorem}{(\textbf{\emph{Existence of Hopf
        bifurcation}})} \label{thm:hopf}
    For the two-agent dynamics \eqref{eq:two_agent_dynamics} with
    $a_{21} \in \real$ as the bifurcation parameter, let
    $\kappa_{i}:= \left[\wi+\frac{3m_i^2}{\ri}\right] >0, i \in
    \{1,2\}$. Let $m_{1}=m_{2}=m \in \real$ and
    $\left[a_{12}(\kappa_{2}-\kappa_{1})-\kappa_{1}^{2}\right]> 0$.
    Then, a family of periodic orbits of \eqref{eq:two_agent_dynamics}
    bifurcates from the consensus equilibrium $\boldz^{*}=m\boldone$
    at $a_{21}=a_{21}^{*} := - (\kappa_1 + \kappa_2 + a_{12})$. \remend
\end{theorem}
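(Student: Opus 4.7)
The plan is to invoke the classical Hopf bifurcation theorem, for which I need to verify three things at $c_2 = c_2^*$: (i) the Jacobian of \eqref{eq:two_agent_dynamics} at the consensus equilibrium $m\boldone$ has a pair of purely imaginary eigenvalues; (ii) these eigenvalues cross the imaginary axis transversally as $c_2$ is varied; and (iii) a non-degeneracy (first Lyapunov coefficient) condition holds. The first step is to confirm that $m\boldone$ is indeed an equilibrium when $m_1 = m_2 = m$: since $\Sf_i(m) = 0$ for both agents and the coupling terms vanish on the consensus manifold, both components of $\mathbf{f}(m\boldone)$ are zero.

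Next, I would linearize \eqref{eq:two_agent_dynamics} at $m\boldone$. Using $\Sf_i'(m) = -\tfrac{\wi\ri}{B} - \tfrac{3m^2}{\ri} = -\kappa_i$, the Jacobian becomes
\begin{equation*}
\mathbf{J}(c_2) = \begin{pmatrix} -\kappa_1 - d & d \\ c_2 g & -\kappa_2 - c_2 g \end{pmatrix},
\end{equation*}
with trace $T(c_2) = -\kappa_1 - \kappa_2 - d - c_2 g$ and determinant $D(c_2) = \kappa_1 \kappa_2 + c_2 g \kappa_1 + d \kappa_2$. Setting $T(c_2^*) = 0$ yields exactly $c_2^* = -(\kappa_1 + \kappa_2 + d)/g$, and substituting this value into $D$ gives $D(c_2^*) = d(\kappa_2 - \kappa_1) - \kappa_1^2$, which is strictly positive by hypothesis. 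Hence $\mathbf{J}(c_2^*)$ has eigenvalues $\pm i\sqrt{D(c_2^*)}$, establishing (i). For the transversality condition (ii), the eigenvalues near $c_2^*$ are $\lambda(c_2) = T(c_2)/2 \pm i\sqrt{D(c_2) - T(c_2)^2/4}$, so $\frac{d}{dc_2}\mathrm{Re}\,\lambda(c_2)\big|_{c_2^*} = -g/2 \neq 0$ since $g = r_1/B > 0$.

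The main obstacle will be the non-degeneracy condition (iii), which requires showing that the first Lyapunov coefficient $\ell_1$ is nonzero. To handle this, I would change coordinates by setting $\mathbf{y} = \boldz - m\boldone$ and use the fact that the only nonlinear terms in the dynamics come from the cubic $-\zi^3/\ri$, which expands as $-(\mathrm{y}_i + m)^3/\ri$ contributing quadratic terms $-3m\mathrm{y}_i^2/\ri$ and a cubic $-\mathrm{y}_i^3/\ri$. I would then bring $\mathbf{J}(c_2^*)$ to its real Jordan (normal) form via an explicit change of basis, and apply the standard Kuznetsov formula for $\ell_1$ in terms of the second- and third-order derivatives of the transformed vector field. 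Given the simplicity of the nonlinearity (only diagonal cubic and quadratic contributions with $m$-dependent and $\ri$-dependent coefficients), $\ell_1$ will be a concrete rational function of the problem parameters; showing it is generically nonzero---or, failing that, noting that $\ell_1 \neq 0$ on an open dense subset of the parameter space---completes the verification and yields the bifurcating family of periodic orbits asserted in the theorem.
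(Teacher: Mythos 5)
Your core argument is the same as the paper's: you compute the identical Jacobian at $m\boldone$, obtain the same trace $-\kappa_1-\kappa_2-d-c_2 g$ and determinant $\kappa_1\kappa_2+c_2 g\kappa_1+d\kappa_2$, locate $c_2^*$ by setting the trace to zero, use the hypothesis $d(\kappa_2-\kappa_1)-\kappa_1^2>0$ to get purely imaginary eigenvalues, and verify transversality via $\frac{d}{dc_2}\mathrm{Re}\,\lambda(c_2)\big|_{c_2^*}=-g/2\neq 0$; all of these computations check out and match the paper line for line. The only divergence is your step (iii): the paper stops after the eigenvalue and transversality conditions and directly invokes the Hopf bifurcation theorem of Guckenheimer--Holmes, whereas you flag the first Lyapunov coefficient $\ell_1$ as "the main obstacle." Two remarks on that. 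First, for the conclusion actually asserted --- the mere existence of a family of periodic orbits bifurcating from $m\boldone$ --- the classical Hopf existence theorem requires only the simple purely imaginary pair and the transversal crossing; $\ell_1\neq 0$ is needed only to determine the direction (super- vs.\ subcritical) and the stability of the bifurcating orbits, so your step (iii) is not required for the theorem as stated. Second, had it been required, your resolution ("generically nonzero, or nonzero on an open dense subset of parameter space") would not be an acceptable proof, since the theorem claims the conclusion for all parameters satisfying the stated hypotheses; you would need to actually evaluate $\ell_1$ via the center-manifold/normal-form computation you sketch. As written, your proof is correct once the superfluous step (iii) is dropped or replaced by a citation to the existence-only version of the Hopf theorem.
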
                      

\section{Simulations} \label{sec:sims}

In this section, we present some simulations to demonstrate our 
analytical results. All the simulation results were generated using 
MATLAB and the ODE 45 solver. Through out this section, the 
$i^{\text{th}}$ element of any parameter data vector corresponds to 
agent $i$ and we use $\approx(=)$ signs for indicating approximate 
(exact) values of the provided data.

In the first set of simulations, we consider a group of 6 agents 
forming opinions according to~\eqref{eq:dynamics}. We assume that the 
agents are connected via an influence network that is shown in 
Figure~\ref{fig:network}. The matrix of influence weights $\adjmat$ is 
equal to the adjacency matrix associated with the graph shown in 
Figure~\ref{fig:network}.
\begin{figure}[htb]
  \centering
  \includegraphics[scale=0.4,trim = 0in 7.3in 2.3in 0.3in, clip]{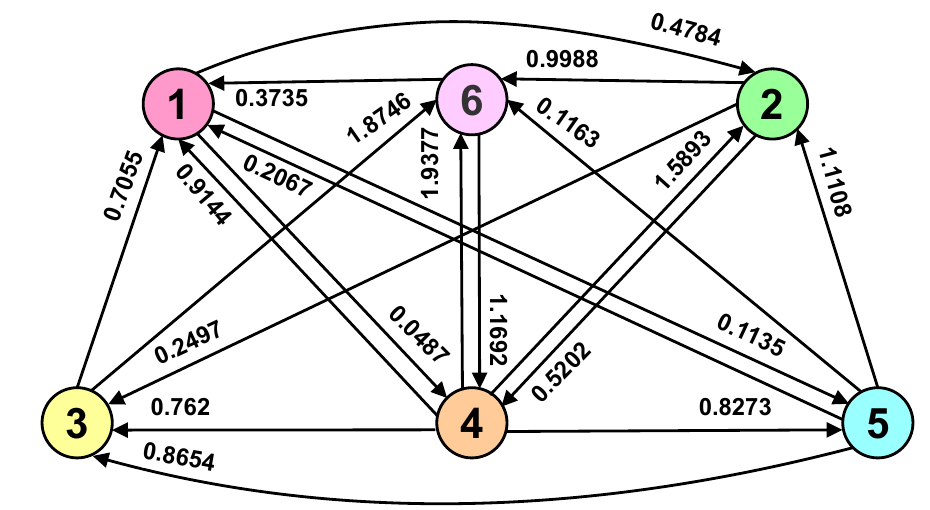}
  \caption{A social network consisting of 6 agents. The direction of 
  any link denotes the direction of influence and the number near 
  arrowhead of any directed link $(k,i)$ represents the corresponding 
  link weight $\aik$.}
  \label{fig:network}
\end{figure}
In Figure~\ref{fig:consensus}, we illustrate the case where the
opinions of all 6 agents reach a non-neutral consensus equilibrium,
with consensus value equal to 40. The model parameters used to
simulate this case are as follows. The vectors of initial opinions
$\zi(0)$, importance weights $w_i$, resources $\ri$, agents' internal
preferences $\di$ of all six agents are
\begin{align*}
  \mathbf{z}_{0}&\approx 
  \begin{bmatrix}
    43.90
    &36.34
    &49.00
    &30.69
    &38.77
    &37.63
  \end{bmatrix},
\\
\mathbf{w} &\approx 
\begin{bmatrix}
  1.09
  &2.98
  &2.59
  &1.82
  &1.01
  &2.65
\end{bmatrix},
\\
\mathbf{r}&\approx
\begin{bmatrix}
  317.43
  &814.54
  &789.07
  &852.26
  &505.64
  &635.66
\end{bmatrix},
\\
\mathbf{p} &\approx  %
\begin{bmatrix}%
  225.22
  &66.40
  &71.33
  &81.16
  &165.39
  &77.99
\end{bmatrix},
\end{align*}
respectively. For these parameters, $m_{i}=40\:,\:\forall i \in \agt$, 
which is equal to the consensus value. Thus, this simulation verifies 
Theorem~\ref{thm:consensus_iff}. For each agent $i \in \agt$, let
$z_{i}^{\infty}$ denote its asymptotic opinion value. The 
vector whose each element is the absolute difference between 
an agent's final consensus opinion and its preference opinion 
($|\zi^{\infty}-\di|$) and the consensus dominance weights $\sigma_i$ are
\begin{align*}
  \mathbf{d}&\approx
   \begin{bmatrix}                            
    185.22
    &26.40
    &31.33
    &41.16
    &125.39
    &37.99
  \end{bmatrix}
\\
\boldsymbol{\sigma}&\approx
\begin{bmatrix}
  345.5
  &2424.4
  &2042.5
  &1555.1
  &510.4
  &1684.8
\end{bmatrix}.
\end{align*}

From this data, we can verify that the dominance claim in
Proposition~\ref{prop:dominance} is satisfied in this case.

\begin{figure}[htb]
  \begin{subfigure}{0.24\textwidth}
    \includegraphics[trim = 1.3in 3.2in 1.4in 3.2in, clip, 
    width=\textwidth]{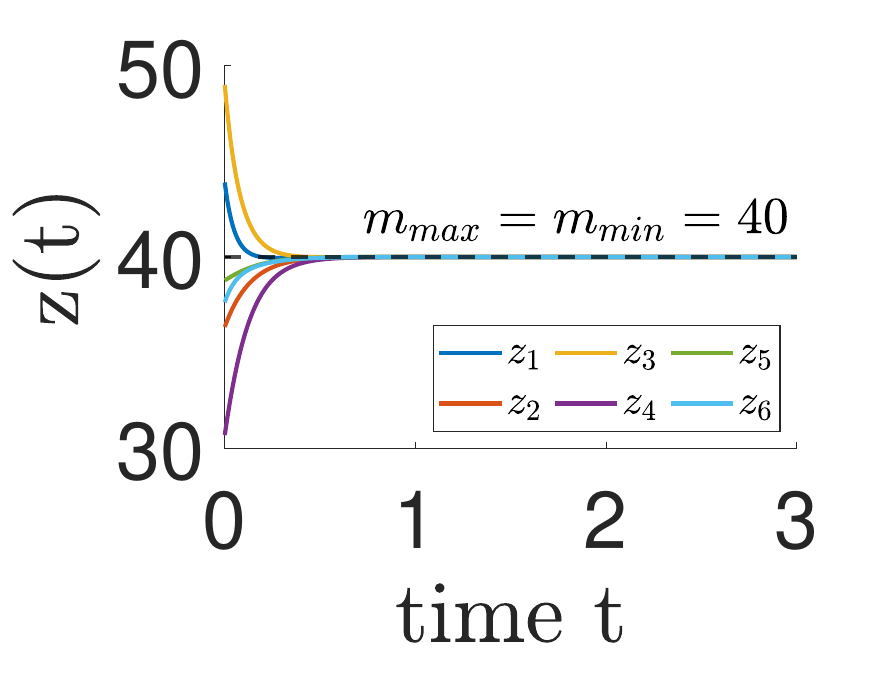}
    \caption{}
    \label{fig:consensus}
  \end{subfigure}
  \begin{subfigure}{0.24\textwidth}
    \includegraphics[trim = 1.3in 3.2in 1.4in 3.2in, clip, 
    width=\textwidth]{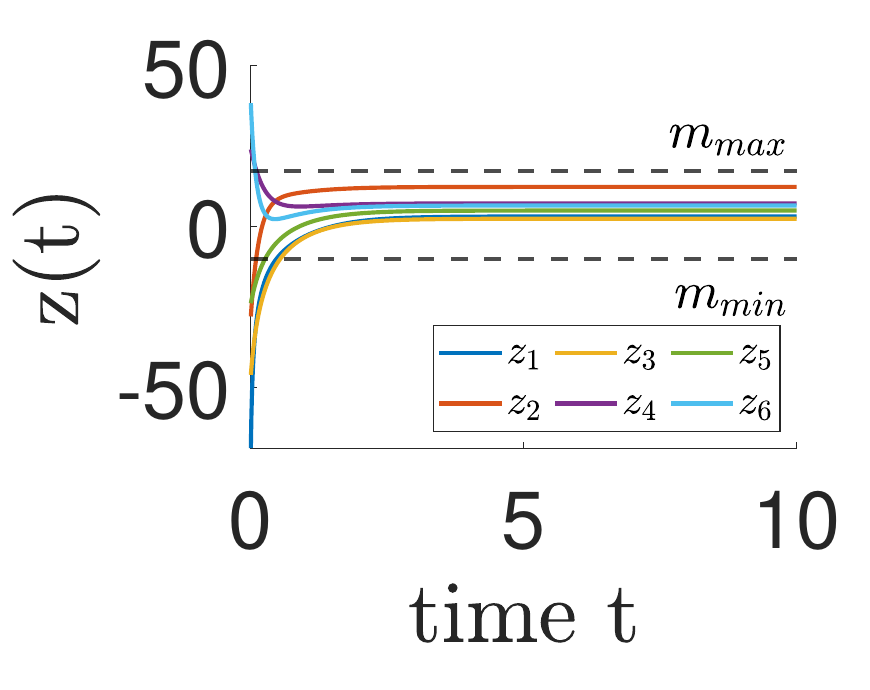}
    \caption{}
    \label{fig:convg_to_non_cnsus_eqm}
  \end{subfigure}
  \caption{Convergence of opinions. (a) Consensus equilibrium. (b) 
  Disagreement equilibrium within $\Mc{M}^{n}$.}
\end{figure}

Figure~\ref{fig:convg_to_non_cnsus_eqm} depicts the scenario where the 
opinions of the 6 agents with no antagonistic relations converge to a 
disagreement equilibrium in the compact set $\Mc{M}^{n}$. The model 
parameters in this case are,
\begin{align*}
  \mathbf{z}_{0}&\approx 
  \begin{bmatrix}
    -69.02
    &-28.03
    &-46.22
    &24.01
    &-23.92
    &38.38
  \end{bmatrix}
\\
\mathbf{w}&\approx
\begin{bmatrix}
  0.22
 &3.48
 &0.56
 &2.56
 &0.95
 &2.62
\end{bmatrix}
\\
\mathbf{r}&\approx
\begin{bmatrix}
  317.43
  &814.54
  &789.07
  &852.26
  &505.64
  &635.66
\end{bmatrix}
\\
\mathbf{p}&\approx
\begin{bmatrix}
  -18.31
  &18.92
  &-12.43
  &6.68
  &3.46
  &7.00
\end{bmatrix}.
\end{align*}
The agents do not achieve consensus because
\begin{equation*}
  \mathbf{m}\approx
  \begin{bmatrix}
    -8.78
    &17.14
    &-10.10
    &6.56
    &3.38
    &6.81
  \end{bmatrix},
\end{equation*}
which violates the necessary condition for consensus given in
Theorem~\ref{thm:consensus_iff}. From 
Figure~\ref{fig:convg_to_non_cnsus_eqm}, it can be seen that every 
agent's opinion converges to a value in the set $\mathcal{M}\approx  
[-10.10,17.14]$ which verifies the results of 
Theorem~\ref{thm:exis_uniq_of_eqm} and 
Proposition~\ref{prop:ub_c_pos}.   

\par Figure~\ref{fig:limit_cycle} demonstrates the oscillatory behavior 
exhibited by opinions of two agents 
under~\eqref{eq:two_agent_dynamics}. The model parameters for this case 
are, $\mathbf{z_{0}}\approx \begin{bmatrix}
	-0.0349   &-0.0039
\end{bmatrix}$, $[a_{12} \:\: a_{21}] \approx \begin{bmatrix}
	-6.6667 & 3.3267
\end{bmatrix}$, $\mathbf{w} \approx \begin{bmatrix}
	2 & 1.3333
\end{bmatrix}$, $\mathbf{p}=\begin{bmatrix}
	0 & 0
\end{bmatrix}$ and $\mathbf{r}=\begin{bmatrix}
	10 & 5
\end{bmatrix}$. Consider \eqref{eq:two_agent_dynamics} with the above 
parameter values and let $a_{21}$ be the bifurcation parameter. Notice 
that since $p_{1}=p_{2}=0$, neutral consensus $\boldz^{*}=\boldzero 
\in\eqpt\:,\:\forall a_{21} \in \real$. These values satisfy the 
assumptions of Theorem~\ref{thm:hopf} ensuring the existence a Hopf 
bifurcation at $a_{21}^{*} = 3.3333$. The eigenvalues of the Jacobian 
$\mathbf{J}(\boldz^{*},a_{21}^{*})$ are $\pm 0.6667j$. As a result, a 
family of periodic orbits bifurcates out of the neutral consensus 
equilibrium at $a_{21}^{*}$ and the periodic solution shown in 
Figure~\ref{fig:limit_cycle} is a member of this family corresponding 
to $a_{21}= 3.3267 < a_{21}^{*}$. Figure~\ref{fig:limit_cycle_ellipse_cut} 
shows the corresponding trajectory in the phase plane intersecting the 
ellipse defined in \eqref{eq:ellipse}. These simulations support the 
claim of 
Lemma~\ref{lem:necessary_condition_for_periodic_orbits} and its 
interpretation made in Remark~\ref{rem:ellipse}.
\begin{figure}[htb]
  \begin{subfigure}{0.24\textwidth}
    \centering
    \includegraphics[trim = 1.3in 3.2in 1.4in 3.2in, 
    clip,width=\textwidth]{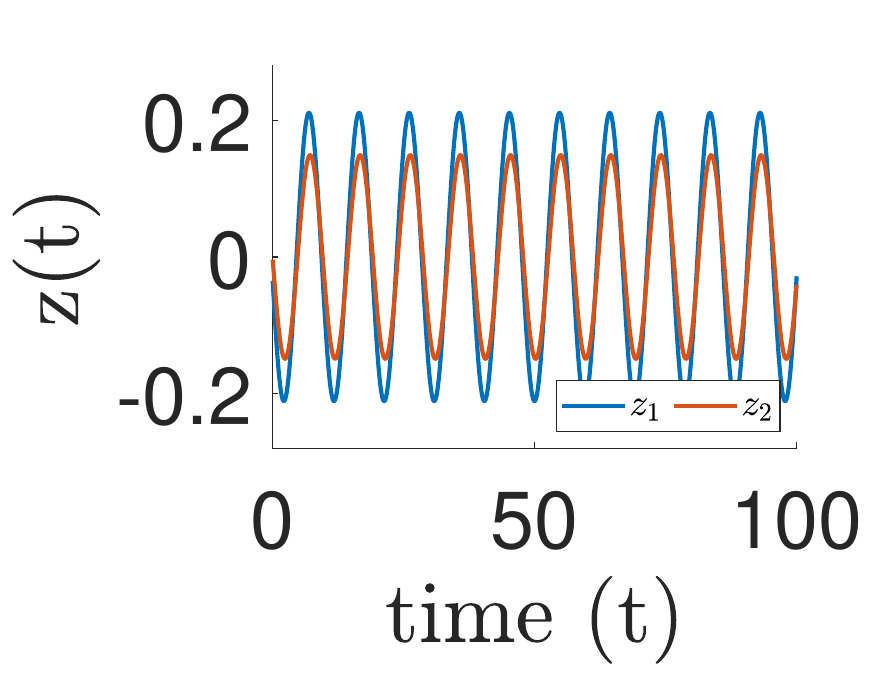}
    \caption{}
    \label{fig:limit_cycle}
  \end{subfigure}
  \begin{subfigure}{0.24\textwidth}
    \centering
    \includegraphics[trim = 1.3in 3.2in 1.4in 3.4in, 
    clip,width=\textwidth]{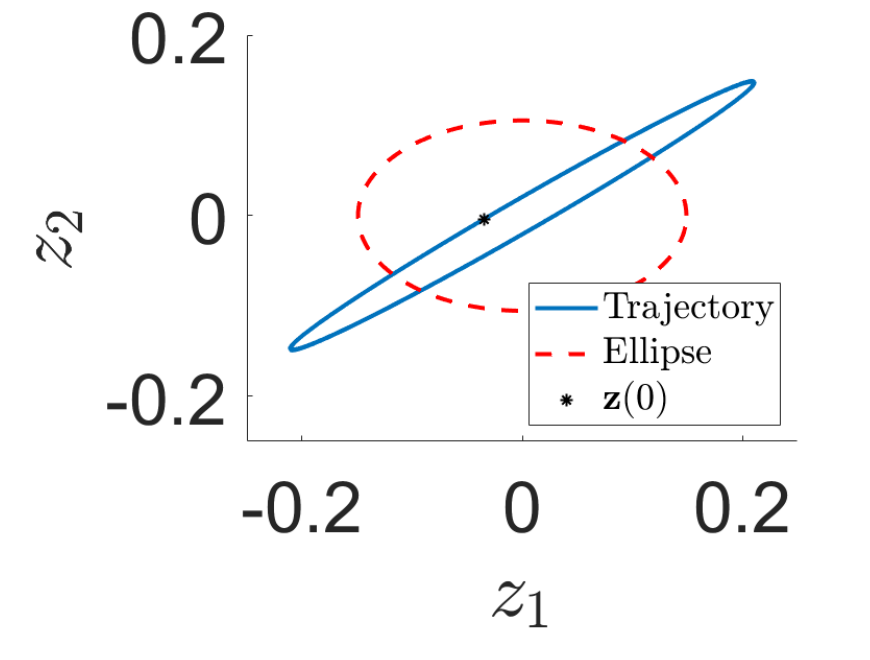}
    \caption{}
    \label{fig:limit_cycle_ellipse_cut}
  \end{subfigure}
  \caption{Oscillatory behavior of opinions. (a) Opinion trajectory. 
  (b) Intersection of corresponding trajectory with the ellipse.}
\end{figure}

\section{Conclusions} \label{sec:conclusions}
We proposed a non-linear model of opinion dynamics to capture the
effect of heterogeneous resources available to the agents on their
opinions.  In contrast to our prior work~\cite{2023_PW_NM_PT}, we
dealt with general social networks with (possibly) antagonistic
relations.  We showed ultimate boundedness of opinions and provided
sufficient conditions for the dynamics to have a globally
exponentially stable equilibrium point.  We also provided necessary
and sufficient condition for the existence of a consensus equilibrium
and quantified social dominance at consensus.  Further, we showed that
the set of Nash equilibria of the opinion formation game is a subset
of the set of equilibrium points of the dynamics and provided
conditions for these two sets to coincide.  In the absence of
antagonistic relations, we gave stronger results. We quantified the
quality of the Nash equilibria with respect to two commonly used
prices of anarchy ($\poa$), provided bounds on these $\poa$'s in terms
of the satisfaction ratios and proved that converging to a consensus
equilibrium is a socially optimal outcome.  Finally, we analyzed the
periodic behavior of opinions exhibited by the proposed dynamics for
the case of two agents. We provided necessary conditions for periodic
solutions to exist and sufficient conditions for a Hopf bifurcation to
occur at the consensus equilibrium. Future research directions include
extensions of the model to a multi-topic scenario, analysis of
periodic behavior exhibited by opinions in the presence of
antagonistic relations for a general $n$-agent case, and exploration
of a more general class of utility functions and resource penalty
functions.

\appendix

\subsection{Proof of Theorem \ref{thm:exis_uniq_of_eqm}}

To prove this, we show that the dynamics~\eqref{eq:dynamics} is 
strongly contracting. Let $\mathbf{J}(\boldz)\ldef
\left[\frac{\partial\mathbf{f}(\boldz)}{\partial\boldz}\right]$ 
denote the Jacobian matrix of~\eqref{eq:dynamics}. The $(i,j)\tth$ 
element of $\mathbf{J}(\boldz)$ is
	\begin{align*}
		[\mathbf{J}(\boldz)]_{ij}=
		\begin{cases}
			\displaystyle
			-\wi - \left( \agtsumknoti \aik
			\right)-\frac{3\zi^{2}}{\ri}\,, 
			& \mathrm{if}\, i= j; \\ 
			\displaystyle \aij, &  \mathrm{if}\, i \neq j .
		\end{cases}
	\end{align*}
The induced $\infty-$log norm of $\mathbf{J}(\boldz)$ is:
	\begin{equation*}
		\mu_{\infty}(\mathbf{J}(\boldz))=\max_{i \in \agt} \bigg[-\wi+\sum_{k \in \Nie}2|\aik| -\frac{3\zi^{2}}{\ri}\bigg].
	\end{equation*}
Under Assumption~\ref{asmp:weak_antagonist_relations}, it can be seen 
that $\exists \,\: k>0$ such that 
$\mu_{\infty}(\mathbf{J}(\boldz))<-k\:;\:\forall \boldz \in \realn$. 
Now, Theorem~\ref{thm:ultimate_bound} guarantees existence 
of a set $\Omega \subset\realn$ which is convex, closed and 
positively invariant under the dynamics \eqref{eq:dynamics}. Further, 
from the discussion below Theorem~\ref{thm:ultimate_bound}, we know that no 
equilibrium exists outside $\Omega$. Thus, 
from Theorem~\ref{thm:contraction}, the dynamics~\eqref{eq:dynamics} is 
strongly contracting in $\Omega$ and the unique 
equilibrium is globally exponentially stable. This completes the proof. $\hfill \blacksquare$

\subsection{Proof of Proposition \ref{prop:ub_c_pos}} 

Notice that under Assumption~\ref{asmp:no_antagonist}, the 
dynamics~\eqref{eq:dynamics} reduces 
to~\eqref{eq:dynamics_no_antagonists}. First, we prove positive 
invariance of $\Mc{M}^{n}$ under~\eqref{eq:dynamics_no_antagonists} by 
inspecting the vector field at $\boldz$ on the boundary of 
$\Mc{M}^{n}$. Consider any $i \in \agt$ such that $z_i = m_{\max}$. 
From~\eqref{eq:SC_sign}, we see 
that $\Sf_i(z_i) \leq 0$ and $\Cf_i^{+}(\boldz) \leq 0$, as $m_i \leq 
m_{\max}$ and $\bar{z}_i$ is a convex combination of $\{z_j\}_{j=1}^n$ 
and $\boldz \in \Mc{M}^{n}$. Thus, $f_i(\boldz) \leq 0$. Similarly, for 
any $i \in \agt$ such that $z_i = m_{\min}$, $f_i(\boldz) \geq 0$. 
This implies that $\Mc{M}^n$ is positively invariant 
under~\eqref{eq:dynamics_no_antagonists}.

Now, we show that $\boldz(t)$ converges to $\Mc{M}^n$. Note that
Assumption~\ref{asmp:no_antagonist} is a special case of
Assumption~\ref{asmp:weak_antagonist_relations} and hence
Theorem~\ref{thm:exis_uniq_of_eqm} guarantees that there is a unique
globally exponentially stable equilibrium point $\boldz^* \in
\eqpt$. We know that under Assumption~\ref{asmp:no_antagonist},
$\zbari$ is a convex combination of $\{z_k\}_{k \in \Ni}$ and
that~\eqref{eq:SC_sign} holds.  So, $\forall \boldz \notin \Mc{M}^n$,
$\exists \, i \in \agt$ such that $\dot{z}_i \neq 0$. Thus, the unique
globally exponentially stable equilibrium point
$\boldz^* \in \Mc{M}^n$ and hence $\boldz(t)$ converges to $\Mc{M}^n$.

Next, we prove the last claim in the result, i.e., for the special case 
of $m_{\min} < m_{\max}$. We have already seen that under 
Assumption~\ref{asmp:no_antagonist}, the unique globally exponentially 
stable equilibrium point $\boldz^* \in \Mc{M}^n$. The following 
statement is a direct consequence of~\eqref{eq:SC_sign}.
\begin{enumerate}[label=\textbf{(S\arabic*)}] 
  \item For $i \in \agt$, $\zi^{*}=m_{\max}$ if and only if $m_{i} 
  = m_{\max}$ and $z_j^{*} = m_{\max}, \ \forall j \in \Ni$.
\end{enumerate}
From the necessary condition on $m_i$ in \textbf{(S1)}, we can say that
\begin{enumerate}[resume,label=\textbf{(S\arabic*)}] 
  \item $\zi^* < m_{\max}, \ \forall i \in (\agt \setminus \agtmmax)$.
\end{enumerate}
A further consequence of \textbf{(S1)} is that for $i \in \agtmmax$,
if $\exists j \in \Ni \cap ( \agt \setminus \agtmmax )$ then
$\zbari^* < m_{\max}$ and hence $z_i^* < m_{\max}$.  This fact along
with \textbf{(S1)} and \textbf{(S2)} imply that for $i \in \agtmmax$,
$\zi^* < m_{\max}$ if and only if there is a directed walk in $\grph$
from $j \in (\agt \setminus \agtmmax)$ to $i$. We can make similar
observations corresponding to $m_{\min}$. From all these observations,
we can finally say that
\begin{enumerate}[resume,label=\textbf{(S\arabic*)}] 
  \item $\boldz^*$ lies in the interior of $\Mc{M}^n$ if and only if  
  $\forall i \in \agtmmax$, $\exists$ a directed walk in $\mathsf{G}$ 
  starting from $j \in \agt \setminus \agtmmax$ to $i$ and $\forall 
  i \in \agtmmin$, $\exists$ a directed walk in $\mathsf{G}$ starting 
  from $j \in \agt \setminus \agtmmin$ to $i$.  
\end{enumerate}
Convergence to $\Mc{M}^n$ occurs in finite time if and only if the 
globally exponentially stable equilibrium $\boldz^*$ lies in the 
interior of $\Mc{M}^n$. This completes the proof of the result. $\hfill 
\blacksquare$

\subsection{Proof of Lemma \ref{lem:convexcost}}
\par Suppose Assumption~\ref{asmp:no_antagonist} holds. Then $\forall i 
\in \agt$, it can be verified using Gerschgorin disc theorem that the 
Hessian of $\costi(\boldz)$ is positive semidefinite, $\forall \boldz 
\in \realn$. Hence, $\costi(\boldz)$ is convex. Now, 
$m_{i}\boldone$ satisfies the first order necessary condition for a 
point to be a local minimizer of $\costi(\boldz)$. Since 
$\costi(\boldz)$ is convex, $m_{i}\boldone$ is also a minimizer of 
$\costi(\boldz)$. $\hfill \blacksquare$

\subsection{Proof of Theorem \ref{thm:poa_bound}}
We get the lower bound on $\poae$ and $\poau$ from the Definition~\ref{def:poa}. To get an upper bound on $\poae$, note that 
\begin{equation*}
	\min_{\boldz \in \realn }\max_{i \in \agt} \costi(\boldz) \geq \max_{i \in \agt}\min_{\boldz \in \realn }\costi(\boldz).
\end{equation*}
Using this inequality and Lemma~\ref{lem:convexcost}, we can upper bound $\poae$ defined in Definition~\ref{def:poa} as, $\forall j \in \agt$,
\begin{align*}
	\poae &\leq \frac{\max_{i \in \agt} 
	\costi(\boldz^{*})}{ \max_{i \in 
	\agt}\costi(m_{i}\boldone)} \leq \max_{i \in \agt} \bigg[\frac{ 
	\costi(\boldz^{*})}{\costj(m_{j}\boldone)} \bigg]\, ,
\end{align*}
for any $j \in \agt$. Choosing $j=i$, 
Definition~\ref{def:satisfactionratio} and Lemma~\ref{lem:convexcost} 
implies the bound on $\poae$ given in~\eqref{ineq:poa_bound}.
Now, consider $\poau$ as in Definition~\ref{def:poa}. From Lemma~\ref{lem:convexcost}, we have,
\begin{equation*}
	\min_{\boldz \in \realn} \sum_{i \in \agt} \costi(\boldz) \geq \sum_{i \in \agt}\costi(m_{i}\boldone).
\end{equation*}
Using this we get, $\forall j \in \agt$,
\begin{align*}
	\poau &\leq \frac{\sum_{i \in \agt}\costi(\boldz^{*})}{\sum_{i \in \agt}\costi(m_{i}\boldone)} = \sum_{i \in \agt} \Bigg[\frac{\costi(\boldz^{*})}{\sum\limits_{j \in \agt}\costj(m_{j}\boldone)}\Bigg] \leq \sum_{i \in \agt} 
	\frac{\costi(\boldz^{*})}{\costj(m_{j}\boldone)}.   
\end{align*}
Choosing $j=i$, Definition~\ref{def:satisfactionratio} and 
Lemma~\ref{lem:convexcost} implies the bound on $\poau$ given 
in~\eqref{ineq:poa_bound}.
$\hfill \blacksquare$

\subsection{Proof of Lemma \ref{lem:equal_periods} }
To prove this, rewrite \eqref{eq:two_agent_dynamics} as, 
  \begin{equation*}
    \dot{z}_{1}- \Sf_{1}(z_{1})+a_{12} z_{1} = a_{12} z_{2}, \
    \dot{z}_{2}- \Sf_{2}(z_{2})+a_{21} z_{2} = a_{21}z_{1}
  \end{equation*}
Now, suppose $z_{1}(t)$ and $z_{2}(t)$ are periodic with fundamental
periods $T_{1}>0$ and $T_{2}>0$ respectively. Then, $ \Sf_1(z_1(t))$
and $\dot{z}_1$ (resp. $ \Sf_2(z_2(t))$ and $\dot{z}_2$) are also
periodic with fundamental period $T_1$ (resp. $T_2$). Then, we can see
that $\exists \,\: m,n \in \nat$ such that $T_{1} = m \, T_{2}$ and
$T_{2} = n \, T_{1}$. Thus, $m=n=1$. $\hfill \blacksquare$

\subsection{Proof of Lemma 
\ref{lem:necessary_condition_for_periodic_orbits}}
From the discussion just above Lemma~\ref{lem:necessary_condition_for_periodic_orbits}, it is necessary that at least one of $a_{12}$ or $a_{21} < 0$ for periodic solutions to exist. Further, it is also necessary that both influence weights are non-zero. Otherwise, from Lemma \ref{lem:eqm_isolated_agents}, the socially closed agent's 
opinion would converge to its own $m_{i}$ value. The opinion of other agent then would always converge 
to an equilibrium.
Now, it remains to show that $(a_{21}+ w_{1} + w_{2} + a_{12})<0$
is necessary for existence of periodic solutions. We prove the inverse using Bendixson's criteria.
Note the divergence of the vector field \eqref{eq:two_agent_dynamics} at $\boldz \in \real^{2}$ is,
\begin{equation*}
	\mathsf{div}(\mathbf{f}(\boldz)) = -(w_{1} + w_{2} 
	+ a_{12}+a_{21}) - \frac{3z_{1}^{2}}{r_{1}} - 
	\frac{3z_{2}^{2}}{r_{2}}.
\end{equation*}

Suppose $(a_{21} + w_{1} + w_{2} + a_{12})>0$.
Then, $\mathsf{div}(\mathbf{f}(\boldz))< 0$, $\forall \boldz \in \real^{2}$. Thus, by Bendixson's criteria \cite[Lemma 2.2]{2002_Khalil_non_linear_book} there are no periodic orbits lying entirely in $\real^{2}$.\\ 
Now, consider the case when 
	$(a_{21} + w_{1} + w_{2} + 
	a_{12})=0$.
In this case, $\mathsf{div}(\mathbf{f}(\boldz))< 
0$, $\forall \boldz \in \real^{2}\setminus \{\boldzero\}$ and 
$\mathsf{div}(\mathbf{f}(\boldz))= 0$ iff $\boldz=\boldzero$. Suppose 
there exists a periodic orbit, then in the simply connected region $\mathcal{S} 
\subset \real^{2}$ enclosed by the periodic orbit, $\iint_{\mathcal{S}} 
\left(\mathsf{div}(\mathbf{f}(\boldz))\right)\mathsf{d}z_{1}\mathsf{d}z_{2}
= 0$. However, this cannot happen unless $\mathcal{S} = \{\boldzero\}$, which 
cannot correspond to a periodic solution. 
Hence, again, periodic orbits cannot exist.
$\hfill \blacksquare$ 

\subsection{Proof of Theorem \ref{thm:hopf}} 	Under the assumption 
$m_{1}=m_{2}=m$, it follows from Theorem~\ref{thm:consensus_iff} that 
$\boldz^{*}=m\boldone \in \eqpt\:;\:\forall a_{21}\in \real$.
The Jacobian of \eqref{eq:two_agent_dynamics} evaluated about $\boldz^{*}$ is,
	\begin{equation*}
		\mathbf{J}(\boldz^{*},a_{21}) = \begin{bmatrix}
			(-\kappa_{1}-a_{12}) & a_{12} \\
			a_{21} & (-\kappa_{2}-a_{21})
		\end{bmatrix}.		
	\end{equation*}
Under the stated assumptions, if $a_{21}=a_{21}^{*}$ then the
eigenvalues of $\mathbf{J}(\boldz^{*},a_{21}^{*})$ are purely
imaginary. This satisfies the first assumption of the Hopf bifurcation
theorem \cite[Theorem~ 3.4.2]{2013_Guckenheimer_book}. The eigenvalues
$\lambda(a_{21})$ of $\mathbf{J}(\boldz^{*},a_{21})$ which are purely
imaginary at $a_{21}=a_{21}^{*}$ vary smoothly with the $a_{21}$. For
values of $a_{21}$ sufficently close to $a_{21}^{*}$, the real part of
complex conjugate eigenvalue pair can be given as
	$Re(\lambda(a_{21})) = -0.5\left[\kappa_{1}+\kappa_{2}+a_{12}+a_{21}\right] $.
 Then, the derivative of $Re(\lambda(a_{21}))$ with respect to $a_{21}$ evaluated at $a_{21}^{*}$ is
	$\frac{\mathsf{d}}{\mathsf{d} a_{21}}\left[Re(\lambda(a_{21}))\right]\mid_{a_{21}=a_{21}^{*}} = -0.5 \neq 0$.
Hence, the second assumption of the Hopf theorem is also satisfied. The claim now follows directly from the Hopf bifurcation theorem. $\hfill \blacksquare$


	\bibliographystyle{IEEEtran}
	\bibliography{references}	
	
	\vspace{-10mm}
\begin{IEEEbiography}[{\includegraphics[width=1in,
		height=1.25in,clip,keepaspectratio] {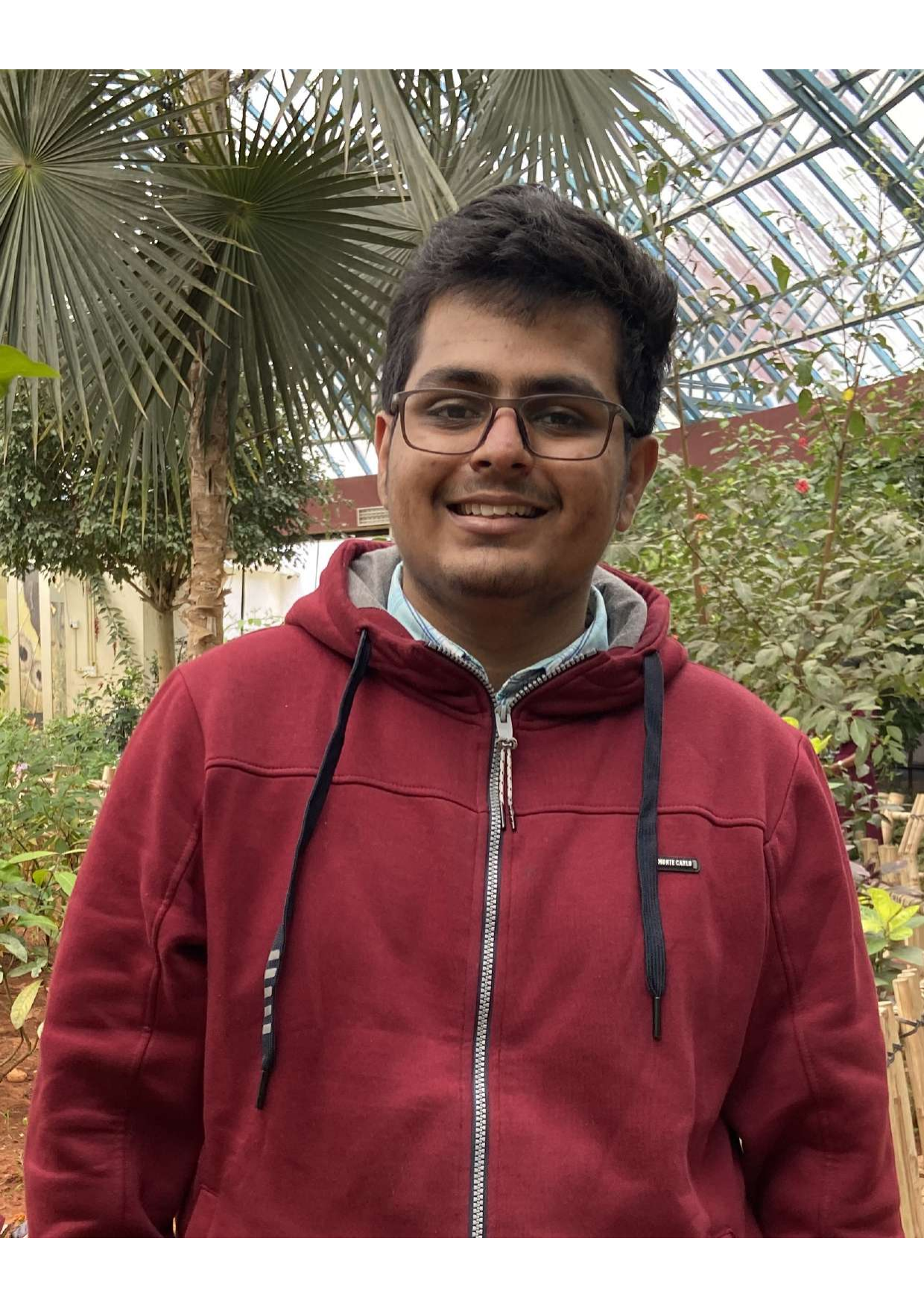}}]{Prashil Wankhede}(Student Member IEEE) received the B.Tech. degree in Electrical Engineering from Sardar Patel College of Engineering, Mumbai, India, in 2021. He is currently a Ph.D. student in the Electrical Engineering Department at the Indian Institute of Science, Bangalore, India. His research interests include multi-agent systems, opinion dynamics, game theory and non-linear control.
\end{IEEEbiography}

\vskip -2\baselineskip plus -1fil

\begin{IEEEbiography}[{\includegraphics[width=1in,
    height=1.25in,clip,keepaspectratio] {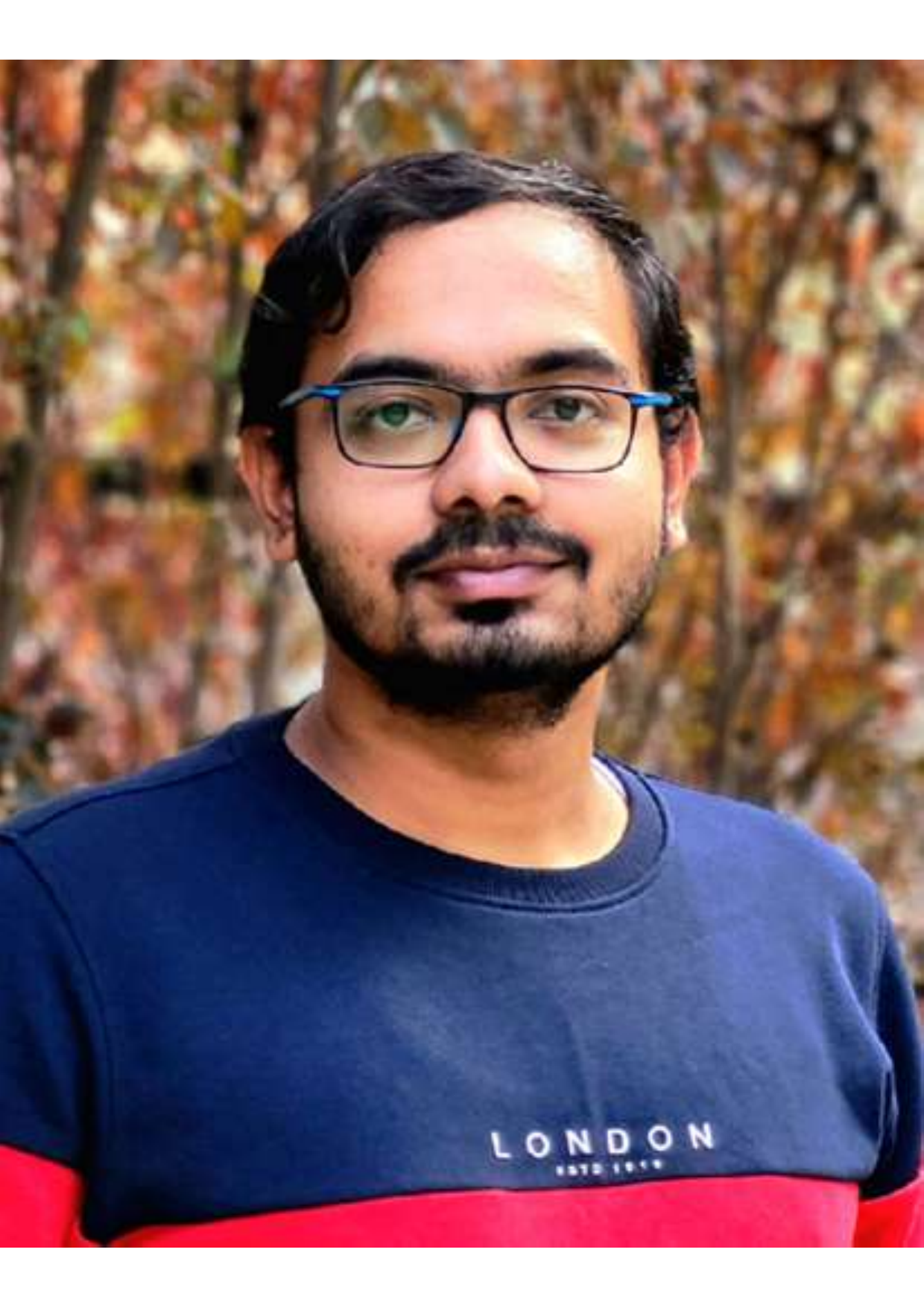}}]{Nirabhra Mandal} (Student Member IEEE) received the B.Tech. degree in Electrical Engineering from the Institute of Engineering and Management, Kolkata, India, in 2017 and the M.Tech.(Res) degree in Electrical Engineering from the Indian Institute of Science, Bengaluru, India, in 2021.
He is currently a Ph.D. student in the Mechanical and Aerospace Engineering Department at University of California, San Diego, USA. His research interests include multiagent systems, distributional robust optimzation, population games, evolutionary dynamics on networks, and nonlinear control.
\end{IEEEbiography}

\vskip -2\baselineskip plus -1fil

\begin{IEEEbiography}[{\includegraphics[width=1in,
    height=1.25in,clip,keepaspectratio] {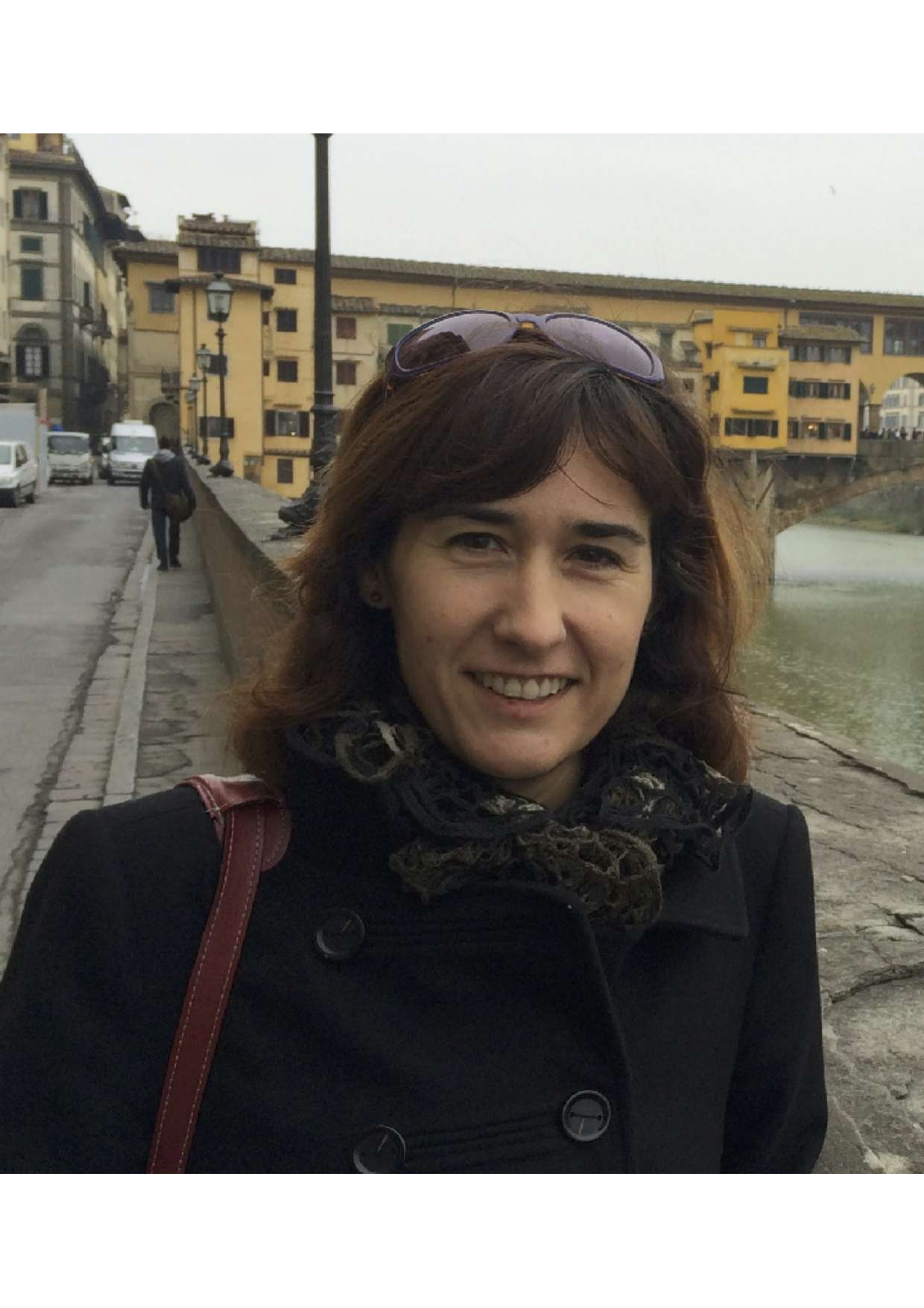}}]{Sonia Mart\'{i}nez} (Fellow IEEE) is a Professor of Mechanical and Aerospace Engineering
at the University of California, San Diego, CA, USA. She received her
Ph.D. degree in Engineering Mathematics from the Universidad Carlos
III de Madrid, Spain, in May 2002. She was a Visiting Assistant
Professor of Applied Mathematics at the Technical University of
Catalonia, Spain (2002- 2003), a Postdoctoral Fulbright Fellow at the
Coordinated Science Laboratory of the University of Illinois,
Urbana-Champaign (2003-2004) and the Center for Control, Dynamical
systems and Computation of the University of California, Santa Barbara
(2004-2005). Her research interests include the control of networked
systems, multi-agent systems, nonlinear control theory, and planning
algorithms in robotics. She became a Fellow of IEEE in 2018. She is a
co-author (together with F. Bullo and J. Cortés) of ‘‘Distributed
Control of Robotic Networks’’ (Princeton University Press, 2009). She
is a co-author (together with M. Zhu) of ‘‘Distributed
Optimization-based Control of Multi-agent Networks in Complex
Environments’’ (Springer, 2015). She is the Editor in Chief of the
recently launched CSS IEEE Open Journal of Control Systems.
\end{IEEEbiography}

\vskip -2\baselineskip plus -1fil

\begin{IEEEbiography}[{\includegraphics[width=1in,
    height=1.25in,clip,keepaspectratio] {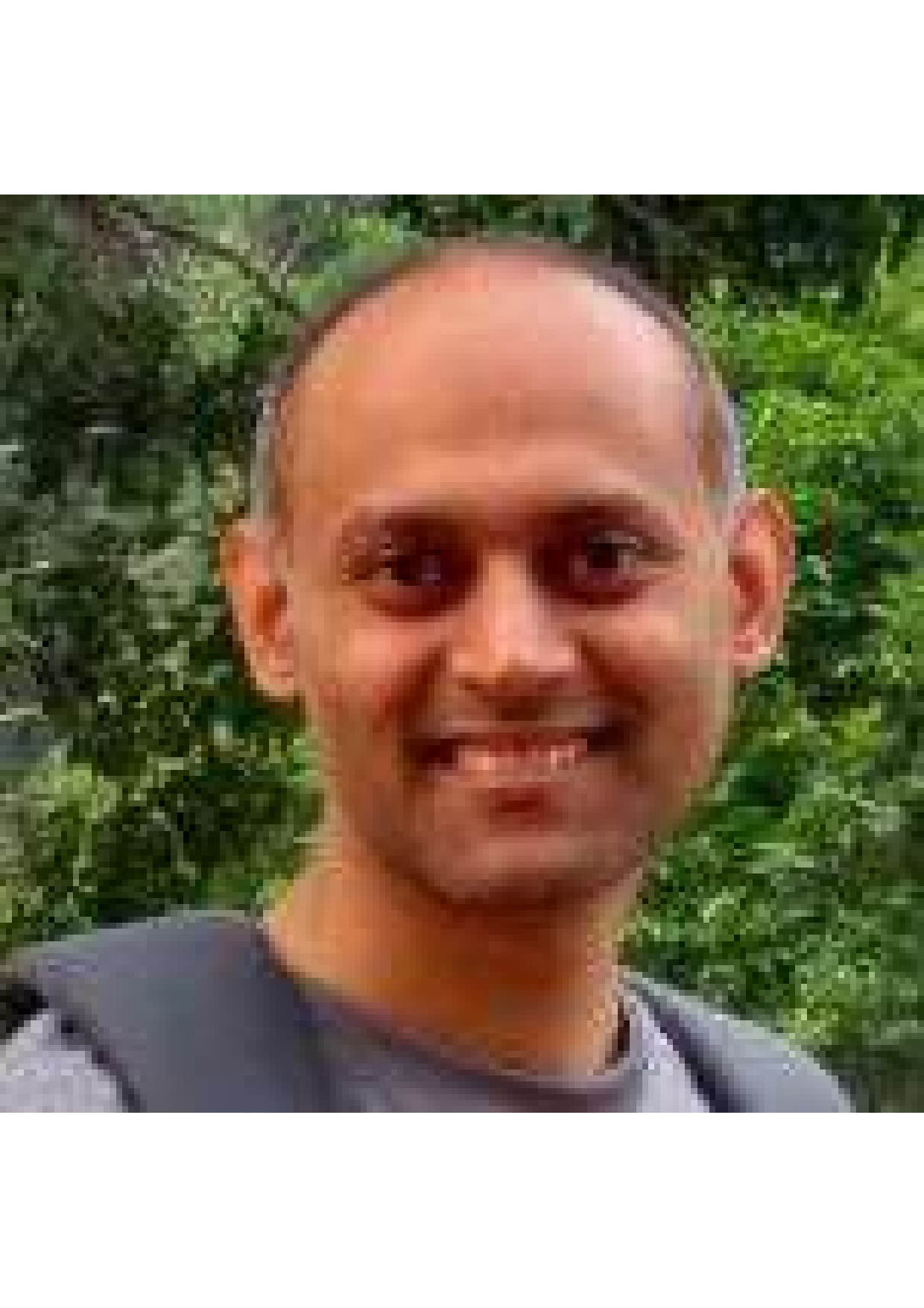}}]{Pavankumar
    Tallapragada} (S'12-M'14) received his Ph.D. degree in Mechanical
  Engineering from the University of Maryland, College Park in
  2013. He was a Postdoctoral Scholar in the Department of Mechanical
  and Aerospace Engineering at the University of California, San Diego
  from 2014 to 2017. He is currently an Associate Professor in the
  Department of Electrical Engineering at the Indian Institute of
  Science, Bangalore, India. His research interests include networked
  control systems, distributed control, multi-agent systems and
  dynamics of social systems.
\end{IEEEbiography}
\end{document}